   \newcommand{\InSoCGVer}[1]{}%
   \newcommand{\InNotSoCGVer}[1]{#1}%
   \newcommand{\InSoCGVer}[1]{#1}%
   \newcommand{\InNotSoCGVer}[1]{}%
\numberwithin{figure}{section}%
\numberwithin{table}{section}%
\numberwithin{equation}{section}%
   \theoremstyle{theorem} \newtheorem{theorem}{Theorem}[section]
   \newtheorem{lemma}[theorem]{Lemma}%
   \newtheorem{corollary}[theorem]{Corollary}
   \newtheorem{observation}[theorem]{Observation}
   \theoremstyle{plain}%
   \newtheorem{defn}[theorem]{Definition}
   \newtheorem*{remark:unnumbered}[theorem]{Remark}%
   \theoremstyle{nonumberplain} \theoremseparator{}
\newtheorem{proof}{Proof:}
   \newcommand{\myparagraph}[1]{\paragraph{#1}}
   \newcommand{\myparagraph}[1]{\bigskip\noindent{\textbf{#1}}}
   \newtheorem{observation}[theorem]{Observation} %
   \theoremstyle{definition}%
   \newtheorem{defn}[theorem]{Definition}%
   \theoremstyle{remark}%
   \newtheorem{remark*}[theorem]{Remark}%
\newcommand{\atgen}{\symbol{'100}}
\newcommand{\SarielThanks}[1]{\thanks{Department of Computer
      Science; 
      University of Illinois; 
      201 N. Goodwin Avenue;
      Urbana, IL, 61801, USA;
      {\tt sariel\atgen{}illinois.edu}; {\tt
         \url{http://sarielhp.org/}.} #1}}
\newcommand{\NirmanThanks}[1]{%
   \thanks{%
      Department of Computer Science; %
      University of California; %
      2120B Harold Frank Hall; %
      Santa Barbara, CA, 93106, USA; %
      {\tt nirman\atgen{}cs.ucsb.edu}; #1}}
\newcommand{\ceil}[1]{\left\lceil {#1} \right\rceil}
\definecolor{blue25}{rgb}{0, 0, 11}
\newcommand{\emphic}[2]{%
   \textcolor{blue25}{%
      \textbf{\emph{#1}}}%
   \index{#2}}
\newcommand{\emphi}[1]{\emphic{#1}{#1}}
\newcommand{\diameterX}[1]{\mathrm{d{}i{}am}\pth{#1}}
\newcommand{\permut}[1]{\left\langle {#1} \right\rangle}
\newcommand{\PrmX}[1]{\permut{#1}}
\newcommand{\norm}[1]{\left\| {#1}  \right\|}
\newcommand{\distY}[2]{\norm{#1 - #2}}
\newcommand{\brc}[1]{\left\{ {#1} \right\}}
\renewcommand{\th}{th\xspace}
\providecommand{\AlgorithmI}[1]{{%
      \textcolor[named]{RedViolet}{\texttt{\bf{#1}}}}}
\providecommand{\Algorithm}[1]{{%
      \AlgorithmI{#1}\index{algorithm!#1@{\AlgorithmI{#1}}}}}
\newcommand{\cPnt}{\overline{\mathrm{c}}}
\providecommand{\ComplexityClass}[1]{{{\textcolor[named]{OliveGreen}{%
      \textsc{#1}}}}}
\providecommand{\NPHard}{{\ComplexityClass{NP-Hard}}\xspace}
\newcommand{\pth}[1]{\mleft({#1}\mright)}
\newcommand{\diam}{\Delta}
\newcommand{\aDiam}{\Delta'}
\newcommand{\CHChar}{{\mathcal{CH}}}
\newcommand{\CHX}[1]{\CHChar\pth{#1}}
\newcommand{\CH}{\mathcal{C}}
\newcommand{\cone}{\phi}
\newcommand{\eps}{{\varepsilon}}%
\newcommand{\PntSet}{\mathsf{P}}
\newcommand{\CenSet}{\mathsf{C}}
\newcommand{\OptSet}{\mathcal{O}}
\newcommand{\ocen}{o}
\newcommand{\SuffixSet}{\mathsf{S}}
\newcommand{\query}{\mathsf{q}}
\newcommand{\ball}{b}%
\newcommand{\ballY}[2]{\mathrm{ball}\pth{#1, #2}}%
\newcommand{\nnY}[2]{\mathrm{nn}\pth{#1, #2}}%
\newcommand{\GSet}{G}%
\newcommand{\ropt}{r_{\mathrm{opt}}}%
\newcommand{\roptX}[1]{\ropt^{#1}}%
\newcommand{\BNN}{T_{nn}}
\newcommand{\BANN}{T_{ann}}
\newcommand{\nnp}{\nu}%
\newcommand{\pnt}{\mathsf{p}}%
\newcommand{\pntA}{\mathsf{x}}%
 \newcommand{\pntB}{\mathsf{z}}
\newcommand{\pntC}{\mathsf{y}}
\newcommand{\pntD}{\mathsf{u}}
\newcommand{\cen}{\mathsf{c}}
\newcommand{\sep}[1]{\,\left|\, {#1} \right.}
\newcommand{\distSet}[2]{d\pth{#1, #2}}
\newcommand{\PartitionY}[2]{\mathrm{\Pi}\pth{#1, #2}}
\newcommand{\cardin}[1]{\left| {#1} \right|}%
\newcommand{\Domain}{\mathcal{D}}
\newcommand{\myqedsymbol}{\rule{2mm}{2mm}}
\newcommand{\DD}{\mathsf{d}}
\renewcommand{\Re}{{\mathbb{R}}}%
\newcommand{\Term}[1]{\textsf{#1}}%
\newcommand{\ANN}{\Term{ANN}\xspace}%
\newcommand{\NN}{\Term{NN}\xspace}%
\newcommand{\VC}{\Term{VC}\xspace}%
\newcommand{\Set}[2]{\left\{ #1 \;\middle\vert\; #2 \right\}}
\newcommand{\DotProd}[2]{\permut{{#1},{#2}}}
\newcommand{\RSample}{\mathsf{R}}
\newcommand{\iterX}[1]{\mathrm{iter}\pth{#1}}
\newcommand{\KCenPrc}[2]{\mathrm{price}\pth{{#1}, {#2} }}
\newcommand{\Copt}{C_\mathrm{opt}}
\newcommand{\coneC}{\cone^{}_C}%
\newcommand{\ConeSet}{\EuScript{C}}
\newcommand{\bd}{\partial}
\newcommand{\AlgGPermutNN}{\Algorithm{GreedyPermutNN}\xspace}
\newcommand{\Spread}{\Phi}
\newcommand{\ray}{\psi}
\newcommand{\vVec}{\mathsf{v}}
\newcommand{\cconst}{c}
\newcommand{\si}[1]{#1}%
\newcommand{\RetInHull}{\textsc{In}\xspace}
\newcommand{\RetOutHull}{\textsc{Out}\xspace}
\providecommand{\GreedyKCenter}{\Algorithm{GreedyKCenter}\xspace}
\newcommand{\ds}{\displaystyle}
\newcommand{\surface}{S}%
\newcommand{\object}{\mathcal{O}}%
\newcommand{\asurface}{\widehat{S}}%
\newcommand{\AFM}{\textrm{AFM}\xspace}%
\newcommand{\etal}{\textit{et~al.}\xspace}
\newcommand{\XSays}[2]{%
   \textcolor{brown}{%
      $\rule[-0.12cm]{0.2in}{0.5cm}$%
      \fbox{\small \tt #1:} }%
   {\textcolor{blue}{{\sc \small{#2}}}}
   \marginpar{\textcolor{red}{#1}}
   \textcolor{brown}{%
      {$\rule[0.1cm]{0.3in}{0.1cm}$\fbox{\tt
            end}$\rule[0.1cm]{0.3in}{0.1cm}$}%
   }%
}
\renewcommand{\XSays}[2]{}%
\newcommand{\HLinkShort}[2]{\hyperref[#2]{#1\ref*{#2}}}
\newcommand{\HLink}[2]{\hyperref[#2]{#1~\ref*{#2}}}
\newcommand{\HLinkPage}[2]{\hyperref[#2]{#1~\ref*{#2}%
      $_\text{p\pageref{#2}}$}}
\newcommand{\HLinkPageOnly}[1]{\hyperref[#1]{Page~\refpage*{#1}%
      $_\text{p\pageref{#1}}$}}
\newcommand{\HLinkSuffix}[3]{\hyperref[#2]{#1\ref*{#2}{#3}}}
\newcommand{\HLinkPageSuffix}[3]{\hyperref[#2]{#1\ref*{#2}%
      #3$_\text{p\pageref{#2}}$}}
\newcommand{\figlab}[1]{\label{fig:#1}}
\newcommand{\figref}[1]{\HLink{Figure}{fig:#1}}
\newcommand{\seclab}[1]{\label{sec:#1}}
\newcommand{\secref}[1]{\HLink{Section}{sec:#1}}
\newcommand{\lemlab}[1]{\label{lemma:#1}}
\newcommand{\lemref}[1]{\HLink{Lemma}{lemma:#1}}%
\newcommand{\thmlab}[1]{{\label{theo:#1}}}
\newcommand{\thmref}[1]{\HLink{Theorem}{theo:#1}}
\providecommand{\deflab}[1]{\label{def:#1}}
\newcommand{\itemlab}[1]{\label{item:#1}}
\newcommand{\itemref}[1]{\HLinkSuffix{(}{item:#1}{)}}
\providecommand{\eqlab}[1]{}%
\renewcommand{\eqlab}[1]{\label{equation:#1}}
\newcommand{\Eqref}[1]{\HLinkSuffix{Eq.~(}{equation:#1}{)}}
\begin{document}

\title{Space Exploration via Proximity Search%
   \InSoCGVer{%
      \footnote{%
         Work on this paper by S.H. and B.R. was partially supported
         by NSF AF awards CCF-1421231, and % Started June 2014
         CCF-1217462.  % Started June 2012
         N.K.  was partially supported by a NSF AF award CCF-1217462
         while at UIUC, and by NSF grant CCF-1161495 and a grant from
         \si{DARPA} while at \si{UCSB}.%
         D.M.  was partially supported by NSF award CCF-1117259 and
         \si{ONR} award N00014-08-1-1015.%
      }%
   }%
}

\InSoCGVer{%
   \author[1]{Sariel Har-Peled}%
   \author[2]{Nirman Kumar}%
   \author[3]{David M. Mount}%
   \author[1]{Benjamin Raichel}%
   \affil[1]{%
      Department of Computer Science, %
      University of Illinois\\ %
      201 N. Goodwin Avenue, %
      Urbana, IL, 61801, USA.\\ %
      \texttt{{sariel}@\si{illinois}.\si{edu}}, %
      \texttt{\si{raichel}2@\si{illinois}.\si{edu}.} %
   }%
   \affil[2]{%
      Department of Computer Science, %
      University of California \\%
      2120B Harold Frank Hall, %
      Santa Barbara, CA, 93106, USA. \\ %
      {\tt \si{nirman\atgen{}cs.\si{ucsb.edu}}}.%
   }%
   \affil[3]{%
      Department of Computer Science, %
      University of Maryland \\ %
      College Park, MD, 20742, USA. \\%
      \si{\tt mount\atgen{}cs.\si{umd}.\si{edu}}.  }%

}

\InNotSoCGVer{%
   \author{%
      Sariel Har-Peled\SarielThanks{%
         Work on this paper was partially supported by a NSF AF award
         CCF-0915984.}%
      \and%
      Nirman Kumar%
      \NirmanThanks{%
         Work on this paper was partially supported by a NSF AF award
         CCF-1217462 while the author was a student at UIUC, and by
         NSF grant CCF-1161495 and a grant from \si{DARPA} while the
         author has been a postdoc at \si{UCSB}.%
      }%
      \and
      David M. Mount%
      \thanks{Department of Computer Science; %
         University of Maryland; %
         College Park, MD, 20742, USA; %
         \si{\tt mount\atgen{}cs.\si{umd}.\si{edu}}; %
         {\tt \url{http://www.cs.umd.edu/\string~mount/}.}  %
         Work on this paper was partially supported by NSF award
         CCF-1117259 and \si{ONR} award N00014-08-1-1015.}
      \and%
      Benjamin Raichel%
   }%
}

\InNotSoCGVer{\date{\today}}%
\InSoCGVer{%
   \authorrunning{S. Har-Peled, N. Kumar, D. M. Mount and B. Raichel}
   
   \Copyright{Sariel Har-Peled, Nirman Kumar, David Mount %
      and Benjamin Raichel}%
   
   \subjclass{F.2.2, I.1.2, I.3.5}%
   % 
   % mandatory: Please choose ACM 1998 classifications from
   % http://www.acm.org/about/class/ccs98-html . E.g., cite
   % as "F.1.1 Models of Computation".
   \keywords{Proximity search, implicit point set, probing.}

   \serieslogo{}%please provide file name (without suffix)
   \volumeinfo%(easy chair interface)
   {}% editors
   {2}% number of editors: 1, 2, ....
   {Conference title on which this volume is based on}% event
   {1}% volume
   {1}% issue
   {1}% starting page number
   \EventShortName{}
   \DOI{10.4230/LI{}PI{}cs.xxx.yyy.p}% to be completed by the volume editor
}

\maketitle

\begin{abstract}
    We investigate what computational tasks can be performed on a
    point set in $\Re^d$, if we are only given black-box access to it
    via nearest-neighbor search. This is a reasonable assumption if
    the underlying point set is either provided implicitly, or it is
    stored in a data structure that can answer such queries. In
    particular, we show the following: \smallskip
    \begin{compactenum}[\quad(A)]
        \item One can compute an approximate bi-criteria $k$-center
        clustering of the point set, and more generally compute a
        greedy permutation of the point set.
        
        \item One can decide if a query point is (approximately)
        inside the convex-hull of the point set.
    \end{compactenum}
    \smallskip %
    We also investigate the problem of clustering the given point
    set, such that meaningful proximity queries can be carried out on
    the centers of the clusters, instead of the whole point set.

\end{abstract}

%%%%%%%%%%%%%%%%%%%%%%%%%%%%%%%%%%%%%%%%%%%%%%%%%%%%%%%%%%%%%%%%%% 
%%%%%%%%%%%%%%%%%%%%%%%%%%%%%%%%%%%%%%%%%%%%%%%%%%%%%%%%%%%%%%%%%% 

\section{Introduction}

Many problems in Computational Geometry involve sets of points in
$\Re^d$. Traditionally, such a point set is presented explicitly, say,
as a list of coordinate vectors. There are, however, numerous
applications in science and engineering where point sets are presented
\emph{implicitly}. This may arise for various reasons: (1) the point
set (which might be infinite) is a physical structure that is
represented in terms of a finite set of sensed measurements such as a
point cloud, (2) the set is too large to be stored explicitly in
memory, or (3) the set is procedurally generated from a highly
compressed form. (A number of concrete examples are described below.)

Access to such an implicitly-represented point set $\PntSet$ is
performed through an \emph{oracle} that is capable of answering
queries of a particular type. We can think of this oracle as a
black-box data structure, which is provided to us in lieu of an
explicit representation. Various types of probes have been studied
(such as finger probes, line probes, and X-ray probes
\cite{s-pgp-89}). Most of these assume that $\PntSet$ is connected
(e.g., a convex polygon) and cannot be applied when dealing with
arbitrary point sets. In this paper we consider a natural choice for
probing general point sets based on computing nearest neighbors, which
we call \emph{proximity probes}.

More formally, we assume that the point set $\PntSet$ is a (not
necessarily finite) compact subset of $\Re^d$. The point set $\PntSet$
is accessible only through a nearest-neighbor data structure, which
given a query point $\query$, returns the closest point of $\PntSet$
to $\query$. Some of our results assume that the data structure
returns an exact nearest neighbor (\NN) and others assume that the
data structure returns a $(1+\eps)$-approximate nearest-neighbor
(\ANN). (See \secref{prelims} for definitions.) In any probing
scenario, it is necessary to begin with a general notion of the set's
spatial location. We assume that $\PntSet$ is contained within a given
compact subset $\Domain$ of $\Re^d$, called the \emph{domain}.

The oracle is given as a black-box. Specifically, we do not allow
deletions from or insertions into the data structure. We do not assume
knowledge of the number of data points, nor do we make any continuity
or smoothness assumptions. Indeed, most of our results apply to
infinite point sets, including volumes or surfaces.

\subsection*{Prior Work and Applications}
Implicitly-represented point sets arise in various applications. One
example is that of analyzing a geometric shape through probing. An
example of this is Atomic Force Microscopy (\AFM) \cite{w-afm-14}.
This technology can reveal the undulations of a surface at the
resolution of fractions of a nanometer. It relies on the principle
that when an appropriately designed tip (the probe) is brought in the
proximity of a surface to scan it, certain atomic forces minutely
deflect the tip in the direction of the surface. Since the deflection
of the tip is generally to the closest point on the surface, this mode
of acquisition is an example of proximity probing. A sufficient number
of such samples can be used to reconstruct the surface
\cite{bqg-afm-86}.

The topic of shape analysis through probing has been well studied
within the field of computational geometry. The most commonly assumed
probe is a \emph{finger probe}, which determines the first point of
contact of a ray and the set. Cole and Yap \cite{cy-sfp-87} pioneered
this area by analyzing the minimum number of finger probes needed to
reconstruct a convex polygon. Since then, various alternative probing
methods have been considered. For good surveys of this area, see
Skiena \cite{s-pgp-89,s-grp-97}.

More recently, Boissonnat \etal~\cite{bgo-lssp-07} presented an
algorithm for learning a smooth unknown surface $\surface$ bounding an
object $\object$ in $\Re^3$ through the use of finger probes. Under
some reasonable assumptions, their algorithm computes a triangulated
surface $\asurface$ that approximates $\surface$ to a given level of
accuracy. In contrast to our work, which applies to general point
sets, all of these earlier results assume that the set in question is
a connected shape or surface.

Implicitly-represented point sets also arise in geometric modeling.
Complex geometric sets are often generated from much smaller
representations. One example are fractals sets, which are often used
to model natural phenomena such as plants, clouds, and terrains
\cite{skgtb-spmtm-09}. Fractal objects are generated as the limit of
an iterative process \cite{m-fgn-83}. Due to their regular, recursive
structure it is often possible to answer proximity queries about such
a set without generating the set itself.

Two other examples of infinite sets generated implicitly from finite
models include (1) subdivision surfaces \cite{at-imss-10}, where a
smooth surface is generated by applying a recursive refinement process
to a finite set of boundary points, and (2) metaballs
\cite{b-gasd-82}, where a surface is defined by a blending function
applied to a collection of geometric balls. In both cases, it is
possible to answer nearest neighbor queries for the underlying object
without the need to generate its boundary.

Proximity queries have been applied before. Panahi
\etal~\cite{pasfg-eppam-13} use proximity probes on a convex polygon
in the plane to reconstruct it exactly. Goel
\etal~\cite{giv-rahdp-01}, reduce the approximation versions of
several problems like diameter, farthest neighbors, discrete center,
metric facility location, bottleneck matching and minimum weight
matching to nearest neighbor queries. They sometimes require other
primitives for their algorithms, for example computation of the
minimum enclosing ball or a dynamic version of the approximate
nearest-neighbor oracle. Similarly, the computation of the minimum
spanning tree \cite{him-anntr-12} can be done using nearest-neighbor
queries (but the data structure needs to support deletions). For more
details, see the survey by Indyk \cite{i-nnhds-04}.

\subsection*{Our contributions}
In this paper we consider a number of problems on
implicitly-represented point sets. Here is a summary of our main
results.

\myparagraph{$k$-center clustering and the greedy permutation.} %
Given a point set $\PntSet$, a \emph{greedy permutation} (informally)
is an ordering of the points of $\PntSet$: $\pnt_1, \ldots, \pnt_k,
\ldots,$ such that for any $k$, the set of points $\brc{\pnt_1,
\ldots, \pnt_k }$ is a $O(1)$-approximation to the optimal $k$-center
clustering. This sequence arises in the $k$-center approximation of
Gonzalez \cite{g-cmmid-85}, and its properties were analyzed by
Har-Peled and Mendel \cite{hm-fcnld-06}. Specifically, if $\PntSet$
can be covered by $k$ balls of radius $r_k$, then the maximum distance
of any point of $\PntSet$ to its nearest neighbor in $\brc{\pnt_1,
\ldots, \pnt_k}$ is $O(r_k)$. 

In \secref{k:cent:clust}, we show that under reasonable assumptions,
in constant dimension, one can compute such an approximate greedy
permutation using $O(k)$ exact proximity queries. If the oracle
answers $(1+\eps)$-\ANN queries, then for any $k$, the permutation
generated is competitive with the optimal $k$-center clustering,
considering the first $O\pth{ k \log_{1/\eps} \Spread}$ points in this
permutation, where $\Spread$ is (roughly) the spread of the point set.
The hidden constant factors grow exponentially in the dimension.

\myparagraph{Approximate convex-hull membership.} %
Given a point set $\PntSet$ in $\Re^d$, we consider the problem of
deciding whether a given query point $\query \in \Re^d$ is inside
its convex-hull $\CH=\CHX{\PntSet}$. We say that the answer is
$\eps$-approximately correct if the answer is correct whenever the
query point's distance from the boundary of $\CH$ is at least $\eps
\cdot \diameterX{\CH}$. In \secref{ch:memb}, we show that, given an
oracle for $(1+\eps^2/c)$-\ANN queries, for some sufficiently large
constant $c$, it is possible to answer approximate convex-hull
membership queries using $O(1/\eps^2)$ proximity queries. Remarkably,
the number of queries is independent of the dimension of the data.

Our algorithm operates iteratively, by employing a gradient
descent-like approach. It generates a sequence of points, all within
the convex hull, that converges to the query point. Similar techniques
have been used before, and are sometimes referred to as the
Frank-Wolfe algorithm. Clarkson provides a survey and some new results
of this type \cite{c-csgaf-10}. A recent algorithm of this type is the
work by Kalantari \cite{k-ctach-12}.  Our main new contribution for
the convex-hull membership problem is showing that the iterative
algorithm can be applied to implicit point sets using nearest-neighbor
queries.

\myparagraph{Balanced proximity clustering.} %
We study a problem that involves summarizing a point set in a way that
preserves proximity information. Specifically, given a set $\PntSet$
of $n$ points in $\Re^d$, and a parameter $k$, the objective is to
select $m$ centers from $\PntSet$, such that if we assign every point
of $\PntSet$ to its nearest center, no center has been selected by
more than $k$ points. This problem is related to topic of capacitated
clustering from operations research \cite{mb-sccp-84}. 

In \secref{density:clust}, we show that in the plane there exists such
a clustering consisting of $O(n/k)$ such centers, and that in higher
dimensions one can select $O((n/k) \log (n/k))$ centers (where the
constant depends on the dimension). This result is not directly
related to the other results in the paper.
 
\myparagraph{Paper organization.} %
In \secref{prelims} we review some relevant work on $k$-center
clustering. In \secref{k:cent:clust} we provide our algorithm to
compute an approximate $k$-center clustering. In \secref{ch:memb} we
show how we can decide approximately if a query point is within the
convex hull of the given data points in a constant number of queries,
where the constant depends on the degree of accuracy desired. Finally,
in \secref{density:clust} we investigate balanced Voronoi partitions,
which provides a density-based clustering of the data. Here we assume
that all the data is known and the goal is to come up with a useful
clustering that can help in proximity search queries.

\section{Preliminaries}
\seclab{prelims}

\subsection{Background --- $k$-center clustering and %
   the greedy permutation}

The following is taken from \cite[Chap. 4]{h-gaa-11}, and is provided
here for the sake of completeness.

In the $k$-center clustering problem, a set $\PntSet \subseteq \Re^d$
of $n$ points is provided together with a parameter $k$. The objective 
is to find a set of $k$ points, $\CenSet \subseteq \PntSet$, such that
the maximum distance of a point in $\PntSet$ to its closest point in
$\CenSet$ is minimized.  Formally, define
\begin{math}
    \KCenPrc{\CenSet}{\PntSet}%
    =%
    \max_{\pnt \in \PntSet}\min_{\cen \in \CenSet} \distY{\pnt}{\cen}.
\end{math}
Let $\Copt$ denote the set of centers achieving this minimum. The
$k$-center problem can be interpreted as the problem of computing the
minimum radius, called the \emph{$k$-center clustering radius}, such
that it is possible to cover the points of $\PntSet$ using $k$ balls
of this radius, each centered at one of the data points. It is known
that $k$-center clustering is \NPHard. Even in the plane, it is
\NPHard to approximate to within a factor of $\pth{1+\sqrt{7}}/2
\approx 1.82$ \cite{fg-oafac-88}.

\myparagraph{The greedy clustering algorithm.} %
Gonzalez \cite{g-cmmid-85} provided a $2$-approximation algorithm for
$k$-center clustering.  This algorithm, denoted by \GreedyKCenter,
repeatedly picks the point farthest away from the current set of
centers and adds it to this set. Specifically, it starts by picking an
arbitrary point, $\cPnt_1$, and setting $\CenSet_1 = \brc{
   \cPnt_1}$. For $i > 1$, in the $i$\th iteration, the algorithm
computes
\begin{align}
    r_{i-1}%
    =%
    \KCenPrc{\CenSet_{i-1}}{\PntSet}%
    =%
    % \max_{\pnt \in \PntSet} d_{i-1} [ \pnt] =
    \max_{\pnt \in \PntSet} \distSet{\pnt}{\CenSet_{i-1}}
    \eqlab{radius:k:center}%
\end{align}
and the point $\cPnt_{i}$ that realizes it, where
\begin{math}
    \distSet{\pnt}{\CenSet_{i-1}} = \min_{\cen \in \CenSet_{i-1}}
    \distY{\pnt}{\cen}.
\end{math}
 Next, the algorithm adds $\cPnt_i$ to $\CenSet_{i-1}$ to form the new
set $\CenSet_{i}$. This process is repeated until $k$ points have been
collected.

If we run \GreedyKCenter till it exhausts all the points of $\PntSet$
(i.e., $k=n$), then this algorithm generates a permutation of
$\PntSet$; that is,
$\PrmX{\PntSet} = \permut{\cPnt_1, \ldots, \cPnt_n}$. We will
refer to $\PrmX{\PntSet}$ as the \emphi{greedy permutation} of
$\PntSet$.  There is also an associated sequence of radii
$\permut{r_1, \ldots, r_n}$, and the key property of the greedy
permutation is that for each $i$ with $1 \leq i \leq n$, all the
points of $\PntSet$ are within a distance at most $r_i$ from the
points of $\CenSet_i = \permut{\cPnt_1, \ldots, \cPnt_i}$.  The greedy
permutation has applications to packings, which we describe next.

\begin{defn}
    \deflab{packing}%
    A set $S \subseteq \PntSet$ is an \emphic{$r$-packing}{packing}
    for $\PntSet$ if the following two properties hold:
    \begin{compactenum}[(i)]
        \item \emphic{Covering property}{covering property}: All the
        points of $\PntSet$ are within a distance at most $r$ from the
        points of $S$.
        
        \item \emphic{Separation property}{separation property}: For
        any pair of points $\pnt, \pntA \in S$, $\distY{\pnt}{ \pntA}
        \geq r$.
    \end{compactenum}
    (For most purposes, one can relax the separation property by
    requiring that the points of $S$ be at distance $\Omega(r)$
    from each other.)
\end{defn}

Intuitively, an $r$-packing of a point set $\PntSet$ is a compact
representation of $\PntSet$ at resolution $r$. Surprisingly, the
greedy permutation of $\PntSet$ provides us with such a representation
for all resolutions.

\begin{lemma}[\cite{h-gaa-11}]%
    \lemlab{greedy:clustering}%
    \begin{inparaenum}[(A)]
        \item Let $\PntSet$ be a set of $n$ points in $\Re^d$, and let
        its greedy permutation be $\langle \cPnt_1, \ldots,$
        $ \cPnt_n \rangle$ with the associated sequence of radii
        $\permut{r_1, \ldots, r_n}$. For any $i$, 
        $\CenSet_i = \permut{\cPnt_1, \ldots, \cPnt_i}$ is an
        $r_i$-packing of $\PntSet$. Furthermore, $r_i$ is a 
        $2$-approximation for the optimal $i$-center clustering radius 
        of $\PntSet$.
        
        \item \itemlab{packing}%
        For any $k$, let $\roptX{k}$ be the radius of the optimal
        $k$-center clustering of $\PntSet$. Then, for any constant
        $c$,
        \begin{math}
            \roptX{O(c^d k)} \leq \roptX{k}/c.
        \end{math}
        
        \item Computing the optimal $k$-center clustering of the first
        $O(k /\eps^d)$ points of the greedy permutation, after
        appropriate rescaling, results in a $(1+\eps)$-approximation to
        the optimal $k$-center clustering of $\PntSet$.
        
    \end{inparaenum}
\end{lemma}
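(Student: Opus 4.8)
The plan is to prove the three parts in order, since each builds on the previous. For part~(A), the covering property is immediate from the construction: $r_i = \KCenPrc{\CenSet_i}{\PntSet} = \max_{\pnt \in \PntSet}\distSet{\pnt}{\CenSet_i}$, so no point of $\PntSet$ is farther than $r_i$ from $\CenSet_i$. For separation I would first note that the radii are non-increasing — since $\CenSet_{i-1}\subseteq \CenSet_i$ we have $\distSet{\pnt}{\CenSet_i}\le \distSet{\pnt}{\CenSet_{i-1}}$ for every $\pnt$, hence $r_i\le r_{i-1}$ — and then observe that for $a<b\le i$ the point $\cPnt_b$ realized $r_{b-1}=\distSet{\cPnt_b}{\CenSet_{b-1}}$ when it was chosen, so $\distY{\cPnt_a}{\cPnt_b}\ge r_{b-1}\ge r_i$ because $\cPnt_a\in\CenSet_{b-1}$; this is exactly the separation bound of \defref{packing}. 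For the $2$-approximation, $r_i\ge\roptX{i}$ is trivial (the set $\CenSet_i\subseteq\PntSet$ is a valid choice of $i$ centers with price $r_i$), and for the reverse bound I would apply separation to the $i+1$ points of $\CenSet_{i+1}$: in any optimal $i$-center clustering two of these must share a cluster and hence lie within $2\roptX{i}$ of each other, forcing $r_i\le 2\roptX{i}$ (the case $i=n$ being degenerate, with $r_n=0$).

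For part~(B), I would invoke the standard packing estimate that a ball of radius $R$ in $\Re^d$ is covered by $O(c^d)$ balls of radius $R/(4c)$ — overlay a grid of side $\Theta(R/c)$ and keep the cells meeting the ball. Take an optimal $k$-center solution of $\PntSet$, namely $k$ balls of radius $\roptX{k}$ whose union contains $\PntSet$; cover each of them by $O(c^d)$ balls of radius $\roptX{k}/(4c)$, and for every small ball that meets $\PntSet$ keep one witness point of $\PntSet$ inside it. The resulting $O(c^d k)$ points of $\PntSet$ cover $\PntSet$ at radius $\roptX{k}/(2c)\le\roptX{k}/c$, which gives the claim. (If Steiner centers are allowed, the witness-snapping step and the extra constant are unnecessary.)

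For part~(C), I would choose $m=O(k/\eps^d)$ by applying part~(B) with $c=\Theta(1/\eps)$, so that $\roptX{m}\le(\eps/4)\roptX{k}$, and then part~(A) yields $r_m\le 2\roptX{m}\le(\eps/2)\roptX{k}$. Let $W=\CenSet_m$ be this prefix of the greedy permutation and let $Y\subseteq W$ be an optimal $k$-center solution of $W$, of radius $\rho$. Snapping each optimal center of $\PntSet$ to a point of $W$ within distance $r_m$ of it (possible by the covering property of the $r_m$-packing $W$) produces $k$ centers in $W$ covering $W$ at radius $\roptX{k}+r_m$, so $\rho\le\roptX{k}+r_m\le(1+\eps/2)\roptX{k}$. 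Finally, for any $\pnt\in\PntSet$ there is $\pntA\in W$ with $\distY{\pnt}{\pntA}\le r_m$ and some $y\in Y$ with $\distY{\pntA}{y}\le\rho$, hence $\distSet{\pnt}{Y}\le r_m+\rho\le(1+\eps)\roptX{k}$; thus $Y$, with its clustering radius inflated by the additive term $r_m$ (the ``appropriate rescaling''), is a $(1+\eps)$-approximation to the optimal $k$-center clustering of $\PntSet$.

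None of the three parts is genuinely deep — this is the classical analysis of the greedy permutation — and the only place that needs real care is part~(B): getting the covering bound with a constant that depends only on $d$ and not on $c$, and then threading the constants cleanly through part~(C) so that the final ratio comes out as exactly $1+\eps$ rather than $1+O(\eps)$.
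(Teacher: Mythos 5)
The paper gives no proof of this lemma --- it is quoted verbatim from \cite{h-gaa-11} and used as a black box (part (B) is invoked in the proof of \thmref{g:p:via:nn}). So there is no in-paper argument to compare against; I can only check your proof on its merits, and it is correct. It is also the standard textbook argument: part (A) via non-increasing radii, the separation bound $\distY{\cPnt_a}{\cPnt_b}\ge r_{b-1}$ for $a<b$, and the $i+1$-point pigeonhole for the factor $2$; part (B) via a grid covering of each optimal ball by $O(c^d)$ cells and snapping to a witness in $\PntSet$ so the centers stay in the ground set; part (C) by taking $c=\Theta(1/\eps)$ in (B), using (A) to get $r_m\le 2\roptX{m}$, and the two-sided triangle-inequality transfer (snap $\OptSet$ into $W$ to bound $\rho$ from above, snap $\PntSet$ into $W$ to bound the final radius). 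The one reading choice you had to make --- interpreting ``appropriate rescaling'' as reporting $\rho+r_m$ rather than $\rho$ as the clustering radius --- is the intended one; the computed centers $Y$ are optimal only for the prefix $W$, so their price against all of $\PntSet$ must absorb the additive covering slack $r_m$, which your parameter choice keeps at $O(\eps)\roptX{k}$. The one small point worth flagging as you wrote it: in the pigeonhole step you apply separation to $\CenSet_{i+1}$, which is an $r_{i+1}$-packing, but what you actually need and what your own earlier inequality $\distY{\cPnt_a}{\cPnt_b}\ge r_{b-1}\ge r_i$ (for $b\le i+1$) delivers is pairwise distance $\ge r_i$, not merely $\ge r_{i+1}$; your proof has this, just phrase it so a reader does not think you are only using the weaker $r_{i+1}$-packing property.
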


\subsection{Setup}
Our algorithms operate on a (not necessarily finite) point set $\PntSet$ in
$\Re^d$. We assume that we are given a compact subset of $\Re^d$,
called the \emphi{domain} and denoted $\Domain$, such that $\PntSet
\subseteq \Domain$. Throughout we assume that $\Domain$ is the unit
hypercube $[0,1]^d$.  The set $\PntSet$ (not necessarily finite) is
contained in $\Domain$. 

Given a query point $\query \in [0,1]^d$, let $\nnY{\query}{\PntSet} =
\arg \min_{\pnt\in\PntSet} \distY{\query}{\pnt}$ denote the nearest
neighbor (\NN) of $\query$. We say a point $\pntA$ is a
$(1+\eps)$-approximate nearest-neighbor (\ANN) for $\query$ if
$\distY{\query}{\pntA} \leq (1+\eps)
\distY{\query}{\nnY{\query}{\PntSet}}$. We assume that the sole access
to $\PntSet$ is through ``black-box'' data structures $\BNN$ and
$\BANN$, which given a query point $\query$, return the \NN and \ANN,
respectively, to $\query$ in $\PntSet$.

\section{Using proximity search to compute %
   $k$-center clustering}
\seclab{k:cent:clust}

\myparagraph{The problem.} %
Our purpose is to compute (or approximately compute) a $k$-center
clustering of $\PntSet$ through the \ANN black box we have, where $k$
is a given parameter between $1$ and $n$.

\subsection{Greedy permutation via \NN queries: %
   \AlgGPermutNN}
\seclab{g:p:alg}

Let $\query_0$ be an arbitrary point in $\Domain$. Let $\nnp_0$ be its
nearest-neighbor in $\PntSet$ computed using the provided \NN
data structure $\BNN$. Let
$\ball_0 = \ballY{\query_0}{ \distY{\query_0}{\nnp_0} }$ be the open
ball of radius $\distY{\query_0}{\nnp_0}$ centered at
$\query_0$. Finally, let $\GSet_0 = \brc{\nnp_0}$, and let
$\Domain_0 = \Domain \setminus \ball_0$.

In the $i$\th iteration, for $i > 0$, let $\query_i$ be the point in
$\Domain_{i-1}$ farthest away from $\GSet_{i-1}$. Formally, this is
the point in $\Domain_{i-1}$ that maximizes
$\distSet{\query_i}{\GSet_{i-1}}$, where $\distSet{\query}{X} =
\min_{\cen \in X} \distY{\cen}{\query}$. Let $\nnp_i =
\nnY{\query_i}{\PntSet}$ denote the nearest-neighbor $\nnp_i$ to
$\query_i$ in $\PntSet$, computed using $\BNN$. Let
\begin{align*}
%    r_i = \distY{\query_i}{\nnp_i},%
    r_i = \distSet{\query_i}{\GSet_{i-1}}, %
    \quad%
    \ball_i = \ballY{\query_i}{r_i},%
    \quad%
    \GSet_i = \GSet_{i-1} \cup \brc{\nnp_i},%
    \quad \text{and} \quad \Domain_i = \Domain_{i-1} \setminus
    \ball_i.
\end{align*}

Left to its own devices, this algorithm computes a sequence of not
necessarily distinct points $\nnp_0, \nnp_1, \ldots$ of $\PntSet$.  If
$\PntSet$ is not finite then this sequence may also have infinitely
many distinct points.  Furthermore, $\Domain_0 \supseteq \Domain_1
\supseteq \ldots$ is a sequence of outer approximations to $\PntSet$.
% that converges to $\PntSet$.

The execution of this algorithm is illustrated in
\figref{g:p:execution}.

\newcommand{\PicStep}[1]{%
   \begin{minipage}{0.22\linewidth}
       \hspace{-0.35cm}%
       {\includegraphics[page=#1,%
          trim = 0.01mm 0.01mm 0.01mm 0.01mm, clip, %
          width=1.08\linewidth]{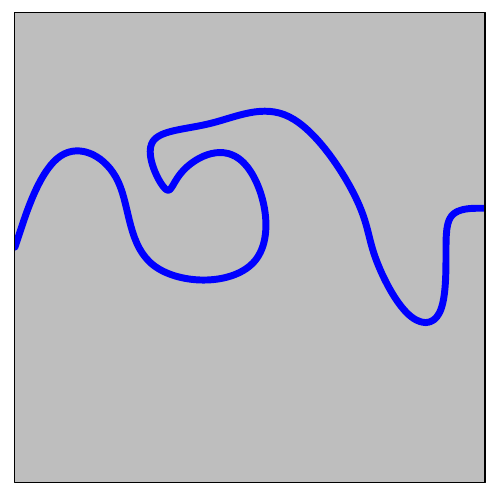}%
       } \hspace{-1cm}
   \end{minipage}
}
\begin{figure}[p]
    \centerline{%
       \begin{minipage}{0.98\linewidth}
           \begin{minipage}{1.01\linewidth}
               \begin{tabular}{cc%
                 cc}
                   \PicStep{1}&%
                   \PicStep{2}&%
                   \PicStep{3}&%
                   \PicStep{4}\\%
                   (1) & (2) & (3) & (4) \\
                   \PicStep{5}&%
                   \PicStep{6}&%
                   \PicStep{7}&%
                   \PicStep{8}\\%
                   (5) & (6) & (7) & (8) \\
                   \PicStep{9}&%
                   \PicStep{10}&%
                   \PicStep{11}&%
                   \PicStep{12}\\%
                   (9) & (10) & (11) & (12) \\
                   \PicStep{13}&%
                   \PicStep{14}&%
                   \PicStep{15}&%
                   \PicStep{16}\\%
                   (13) & (14) & (15) & (16) \\
               \end{tabular}
               \captionof{figure}{%
                  An example of the execution of the algorithm
                  \AlgGPermutNN of \secref{g:p:alg}. %
               }
               \figlab{g:p:execution}
           \end{minipage}
       \end{minipage}
    }
\end{figure}

\subsection{Analysis}

Let $\OptSet = \brc{\ocen_1, \ldots, \ocen_k}$ be an optimal set of
$k$ centers of $\PntSet$. Formally, it is a set of $k$
points in $\PntSet$ that minimizes the quantity $\roptX{k} =
\max_{\query \in \PntSet} \distSet{\query}{\OptSet}$. Specifically,
$\roptX{k}$ is the smallest possible radius such that $k$ closed balls
of that radius centered at points in $\PntSet$, cover $\PntSet$.  Our
claim is that after $O(k)$ iterations of the algorithm \AlgGPermutNN,
the sequence of points provides a similar quality clustering of
$\PntSet$.

For any given point $\pnt \in \Re^d$ we can cover the sphere of
directions centered at $\pnt$ by narrow cones of angular diameter at
most $\pi/12$. We fix such a covering, denoting the set of cones by
$\ConeSet_\pnt$, and observe that the number of such cones is a
constant $\cconst_d$ that depends on the dimension. Moreover, by
simple translation we can transfer such a covering to be centered at
any point $\pnt' \in \Re^d$.

% We can now formalize our main result for \AlgGPermutNN.

\begin{lemma}
    \lemlab{only:one:in:cone}%
    After $\mu = k \cconst_d$ iterations, for any optimal center
    $\ocen_i \in \OptSet$, we have $\distSet{\ocen_i}{\GSet_\mu} \leq
    3\ropt$, where $\ropt = \roptX{k}$.
\end{lemma}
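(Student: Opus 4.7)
The plan is to argue by contradiction: suppose that for some optimal center $\ocen \in \OptSet$ we have $\distSet{\ocen}{\GSet_\mu} > 3\ropt$, and derive a contradiction from the fact that $\mu + 1 = k\cconst_d + 1$ retrieved points are forced into only $k\cconst_d$ pigeonholes.

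First I would establish the basic monotonicity of the radii. Since $\Domain_j \subseteq \Domain_{j-1}$ and $\GSet_{j-1} \subseteq \GSet_j$, the sequence $r_1 \geq r_2 \geq \ldots$ is non-increasing. I then push the contradiction hypothesis back to a uniform lower bound on $r_j$. There are two sub-cases. If $\ocen \in \Domain_\mu$ (never absorbed into a ball), then $\ocen$ is admissible in the maximization defining $r_{\mu+1}$, hence $r_{\mu+1} \geq \distSet{\ocen}{\GSet_\mu} > 3\ropt$, and by monotonicity $r_j > 3\ropt$ for every $j \leq \mu$. Otherwise, $\ocen \in \ball_i$ for some smallest $i \leq \mu$; because $\ocen \in \PntSet$, the \NN property gives $\distY{\query_i}{\nnp_i} \leq \distY{\query_i}{\ocen} < r_i$, so by the triangle inequality $\distY{\nnp_i}{\ocen} < 2 r_i$ and consequently $\distSet{\ocen}{\GSet_\mu} < 2 r_i$, forcing $r_i > 3\ropt/2$ and, by monotonicity, $r_j > 3\ropt/2$ for every $j \leq i$.

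Next I would carry out the cone pigeonhole. Each retrieved point $\nnp_j$ (for $j = 0, \ldots, \mu$) belongs to some optimal ball $B(\ocen_l, \ropt)$; fix such an $l = l(j)$. Then the direction $\nnp_j - \ocen_{l(j)}$ lies in some cone $\cone(j) \in \ConeSet_{\ocen_{l(j)}}$. The labels $(l(j), \cone(j))$ are drawn from at most $k \cdot \cconst_d$ values, but we have $\mu + 1 = k\cconst_d + 1$ iterations, so two distinct indices $a < b$ collide: $\nnp_a, \nnp_b \in B(\ocen', \ropt)$ for a common optimal center $\ocen'$, with the directions $\nnp_a - \ocen'$ and $\nnp_b - \ocen'$ sharing a single narrow cone $\cone$ of angular diameter at most $\pi/12$ at $\ocen'$.

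Finally I would leverage this collision to bound $r_b$. Since $\nnp_a \in \GSet_{b-1}$, $r_b \leq \distY{\query_b}{\nnp_a}$. The point $\query_b$ is constrained by the \NN condition $\distY{\query_b}{\nnp_b} \leq \distY{\query_b}{\ocen'}$ (because $\ocen' \in \PntSet$), placing $\query_b$ in the half-space closer to $\nnp_b$ than to $\ocen'$. Combining this with the radial bounds $\|\nnp_a - \ocen'\|, \|\nnp_b - \ocen'\| \leq \ropt$ and the $\pi/12$ angular spread of $\cone$, a direct trigonometric calculation (parameterizing $\query_b$ along and perpendicular to the cone axis) bounds $\distY{\query_b}{\nnp_a}$ by $3\ropt$, contradicting Step 2's bound $r_b > 3\ropt$ in the first sub-case. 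In the absorbed sub-case, the same pigeonhole is applied only to the first $i$ iterations with the sharper threshold $r_j > 3\ropt/2$, again producing a contradiction.

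The main obstacle is the final trigonometric step: one must show that the chosen angular diameter $\pi/12$ at the optimal center $\ocen'$, taken together with both retrieved points lying within $\ropt$ of $\ocen'$ and with the half-space constraint from the \NN condition, is tight enough to pin $\distY{\query_b}{\nnp_a}$ at $3\ropt$. Coarser cones would degrade the approximation factor, and the calibration of $\pi/12$ versus the factor $3$ is the only quantitatively delicate part of the argument; everything else is bookkeeping about monotonicity, pigeonhole, and the \NN-snap inequality $\distY{\query_j}{\nnp_j} \leq r_j$.
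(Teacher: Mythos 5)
There is a genuine gap, and it lies exactly where you flagged the ``only quantitatively delicate part'': the final collision step is false as stated. Your pigeonhole classifies the \emph{retrieved} points $\nnp_j$ by the cone (at the optimal center $\ocen'$) containing the direction $\nnp_j-\ocen'$. But all retrieved points assigned to $\ocen'$ already lie in $\ballY{\ocen'}{\ropt}$, so this classification adds no metric information beyond $\distY{\nnp_a}{\nnp_b}\le 2\ropt$, and nothing in the collision controls where the \emph{query} $\query_b$ is. The constraints you list --- $\distY{\nnp_a}{\ocen'},\distY{\nnp_b}{\ocen'}\le\ropt$, the $\pi/12$ cone at $\ocen'$, and the half-space condition $\distY{\query_b}{\nnp_b}\le\distY{\query_b}{\ocen'}$ --- are all compatible with $\query_b$ being arbitrarily far from the cluster (up to the domain diameter), in which case $\distY{\query_b}{\nnp_a}\ge\distY{\query_b}{\ocen'}-\ropt\gg 3\ropt$ and no contradiction with $r_b>3\ropt$ arises. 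A concrete failure: the same point can be retrieved twice, $\nnp_a=\nnp_b$, from two queries that are both far from (and on different sides of) the cluster; every premise of your collision step holds trivially, yet $r_b\le\distY{\query_b}{\nnp_a}$ yields nothing of order $\ropt$. In short, bounding $\distY{\query_b}{\nnp_a}$ by $3\ropt$ requires an upper bound on the nearest-neighbor distance $\distY{\query_b}{\nnp_b}$, which is precisely what is unavailable for queries far from all clusters --- and handling those far queries is the actual content of the lemma.

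The paper's proof places the cones' apex at $\ocen$ but classifies the \emph{query} points: $\query_j$ is admissible for a cone if $\query_j$ lies in that cone and $\nnp_j$ belongs to the cluster of $\ocen$. Two mechanisms you do not use then carry the argument: (i) each admissible query carves away a ball of radius at least its NN-distance $\ell_i'>\ropt$ (under the contradiction hypothesis), so admissible queries in one cone are pairwise $>\ropt$ apart, and a packing argument bounds by $O(1)$ those within $10\ropt$ of $\ocen$; (ii) for a far admissible query, the narrowness of the cone forces any later admissible query in the same cone to be either within $3\ropt$ of $\ocen$ or farther from $\ocen$ by a factor $4/3$, which contradicts the monotonicity $r_1\ge r_2\ge\cdots$ that you correctly established but never exploit against far queries. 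Separately, your ``absorbed'' sub-case restricts the pigeonhole to the first $i$ iterations, where there need not be enough iterations to force a collision at all, and even the paper only gets $O(1)$ (not one) admissible queries per cone, so an exact $\mu+1$-versus-$k\cconst_d$ pigeonhole would not match the claimed constant in any case. To repair your write-up you would essentially have to import the paper's cone-on-queries formulation together with the carving/packing/monotonicity analysis.
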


\begin{proof}
    If for any $j \leq \mu$, we have $r_j \leq 3\ropt$ then all the
    points of $\Domain_{j-1} \supseteq \PntSet$ are in distance at
    most $3\ropt$ from $\GSet_j$, and the claim trivially holds as
    $\OptSet \subseteq \PntSet$.
    
    Let $\ocen$ be an optimal center and let $\PntSet_\ocen$ be the
    set of points of $\PntSet$ that are closest to $\ocen$ among all
    the centers of $\OptSet$, i.e., $\PntSet_\ocen$ is the cluster of
    $\ocen$ in the optimal clustering.  Fix a cone $\cone$ from
    $\ConeSet_\ocen$ ($\cone$'s apex is at $\ocen$). Consider the
    output sequence $\nnp_0, \nnp_1, \ldots$, and the corresponding
    query sequence $\query_0, \query_1, \ldots$ computed by the
    algorithm. In the following, we use the property of the algorithm
    that $r_1 \geq r_2 \geq \cdots$, where
    $r_i = \distSet{\query_i}{\GSet_{i-1}}$.  A point $\query_j$ is
    \emphi{admissible} if
    \begin{inparaenum}[(i)]
        \item $\nnp_j \in \PntSet_\ocen$, and
        \item $\query_j \in \cone$ (in particular, $\nnp_j$ is not
        necessarily in $\cone$).
    \end{inparaenum}
    
    \parpic[r]{{\includegraphics{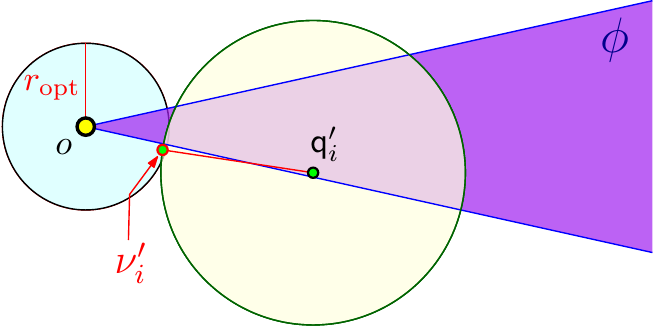}}}
    
    We proceed to show that there are at most $O(1)$ admissible points
    for a fixed cone, which by a packing argument will imply the claim
    as every $\query_j$ is admissible for exactly one cone.  Consider
    the induced subsequence of the output sequence restricted to the
    admissible points of $\phi$: $\nnp_1', \nnp_2', \ldots$, and let
    $\query_1', \query_2'$, $\ldots$ be the corresponding query points
    used by the algorithm.  Formally, for a point $\nnp_i'$ in this
    sequence, let $\iterX{i}$ be the iteration of the algorithm it was
    created.  Thus, for all $i$, we have
    $\query_i' = \query_{\iterX{i}}$ and $\nnp_i' = \nnp_{\iterX{i}}$.

%    and $\GSet_{i}' = \brc{\nnp_1,\ldots, \nnp_{\iterX{i}}}$.

    \smallskip%
    
    Observe that
    $\PntSet_\ocen \subseteq \PntSet \cap \ballY{\ocen}{\ropt}$. This
    implies that %
    \InSoCGVer{%
       \begin{math}
           \distY{\nnp_j'}{\ocen} \leq \ropt,
       \end{math} %
       for all $j$. %
    }%
    \InNotSoCGVer{%
       \begin{align*}
           \distY{\nnp_j'}{\ocen} \leq \ropt, \qquad \text{ for all }
           j.
       \end{align*}%
    }
    
    Let $\ell_i' = \distY{\query_i'}{\nnp_i'}$ and
    $r_i' = \distSet{\query_i'}{\GSet_{\iterX{i} -1}}$.
    Observe that for $i > 1$, we have $\ell_i' \leq r_i' \leq
    \ell_i' + 2 \ropt$, as $\nnp_{i-1}' \in \PntSet_\ocen$. Hence,
    if $\ell_i' \leq \ropt$, then $r_i' \leq 3\ropt$, and we are done.
    This implies that for any $i,j$, such that $1 < i<j$, it must be
    that $\distY{\query_i'}{\query_j'} \geq \ell_i' > \ropt$, as the
    algorithm carves out a ball of radius $\ell_i'$ around
    $\query_i'$, and $\query_j'$ must be outside this ball.

    By a standard packing argument, there can be only $O(1)$ points in
    the sequence $\query_2', \query_3', \ldots$ that are within
    distance at most $10\ropt$ from $\ocen$. If there are no points
    beyond this distance, we are done. Otherwise, let $i>1$ be the
    minimum index, such that $\query_i'$ is at distance larger than
    $10\ropt$ from $\ocen$. We now prove that the points of $\cone \setminus
    \ballY{\query_i'}{\ell_i'}$ are of two types --- those contained
    within $\ballY{\ocen}{3\ropt}$ and those that lie at distance
    greater than $(4/3)\ell_i'$ from $\ocen$. %

    \parpic[r]{\includegraphics{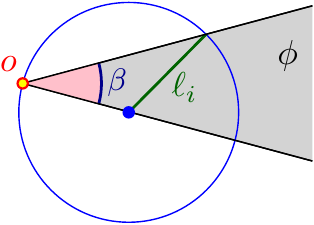}} 

    To see this, observe that since the angle of the cone was chosen
    to be sufficiently small, $\ballY{\query_i'}{\ell_i'}$ splits
    $\cone$ into two components, where all the points in the component
    containing $\ocen$ are distance $<3 \ropt$ from $\ocen$.  The
    minimum distance to $\ocen$ (from a point in the component not 
    containing $\ocen$) is realized when $\query_i'$ is on the boundary
    of $\cone$ and $\ocen$ is on the boundary of
    $\ballY{\query_i'}{\ell_i'}$.  Then the distance of any point of
    $\cone \setminus \ballY{\query_i'}{\ell_i'}$ from $\ocen$ is at
    least
    \begin{math}
        2 \ell_i'\cos(\beta)%
        \geq%
        2 \ell_i' \sqrt{3/4}%
        \geq 1.73 \ell_i,
    \end{math}
    as the opening angle of the cone is at most $\pi/12$. See figure
    on the right. The general case is somewhat more complicated as
    $\ocen$ might be in distance at most $\ropt$ from the boundary of
    $\ballY{\query_i'}{\ell_i'}$, but as $\ell_i \geq 10\ropt$, the
    claim still holds --- we omit the tedious but straightforward
    calculations.

    In particular, this implies that any later point $\query_k'$ in
    the sequence (i.e., $k > i$) is either one of the $O(1)$ close
    points, or it must be far away, but then it is easy to argue that
    $r_k'$ must be larger than $r_i'$, which is a contradiction as
    $r_2\geq r_3 \geq \cdots$ (as $r_i'$ appears before $r_k'$ in this
    sequence).
\end{proof}

The above lemma readily implies the following.
% \lemlab{only:one:in:cone}%

\begin{theorem}
    \thmlab{g:p:via:nn}%
    Let $\PntSet \subseteq \Domain$ be a given set of points in
    $\Re^d$ (not necessarily finite), where $\Domain$ is a 
    %convex
    bounded set in $\Re^d$. Furthermore, assume that $\PntSet$ can be
    accessed only via a data structure $\BNN$ that answers exact
    nearest-neighbor (\NN) queries on $\PntSet$.  The algorithm
    \AlgGPermutNN, described in \secref{g:p:alg}, computes a
    permutation $\permut{\nnp_0, \ldots}$ of $\PntSet$, such that, for
    any $k>0$,
    \begin{math}
        \PntSet \subseteq \bigcup_{i=}^{c k} \ballY{
           \nnp_i}{\roptX{k}},
    \end{math}
    where $c$ is a constant (independent of $k$), and $\roptX{k}$ is
    the minimum radius of $k$ balls (of the same radius) needed to
    cover $\PntSet$.
    
    The algorithm can be implemented, such that running it for $i$
    iterations, takes polynomial time in $i$ and involves $i$ calls to
    $\BNN$.
\end{theorem}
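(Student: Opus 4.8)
The plan is to obtain the theorem almost directly from \lemref{only:one:in:cone}, invoking \lemref{greedy:clustering} only to tighten the covering radius from a constant multiple of $\roptX{k}$ down to $\roptX{k}$ itself. The crucial observation is that \AlgGPermutNN performs a \emph{single} deterministic run and outputs one sequence $\permut{\nnp_0,\nnp_1,\ldots}$, whereas \lemref{only:one:in:cone} applies to that one sequence for \emph{every} value of its parameter; so it suffices to show that, for each fixed $k$, some prefix of length $O(k)$ of this sequence is an $\roptX{k}$-cover of $\PntSet$.

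Fix $k$. Let $a=a(d)$ be the constant supplied by \lemref{greedy:clustering}\itemref{packing} applied with $c=4$, so that $\roptX{ak}\le\roptX{k}/4$ (using that $\roptX{\cdot}$ is non-increasing), and set $k'=ak$. Apply \lemref{only:one:in:cone} with the parameter $k'$ in place of $k$: after $\mu=k'\cconst_d=a\cconst_d k$ iterations, every center $\ocen$ of an optimal $k'$-center clustering $\OptSet'$ of $\PntSet$ (with centers drawn from $\PntSet$, as in the analysis above) satisfies $\distSet{\ocen}{\GSet_\mu}\le 3\roptX{k'}$. Now take any $\pnt\in\PntSet$. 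By optimality of $\OptSet'$ there is $\ocen\in\OptSet'$ with $\distY{\pnt}{\ocen}\le\roptX{k'}$, and there is $\nnp_j\in\GSet_\mu=\brc{\nnp_0,\ldots,\nnp_\mu}$ with $\distY{\ocen}{\nnp_j}\le 3\roptX{k'}$; the triangle inequality then gives $\distY{\pnt}{\nnp_j}\le 4\roptX{k'}\le\roptX{k}$. Hence $\PntSet\subseteq\bigcup_{i=0}^{\mu}\ballY{\nnp_i}{\roptX{k}}$, and since $\mu+1\le ck$ with $c=a\cconst_d+1=2^{O(d)}$ a constant independent of $k$, the covering claim follows. (If $\PntSet$ is finite and $k'>\cardin{\PntSet}$ then $\roptX{k'}=0$ and the statement is trivial; the extension of \lemref{greedy:clustering}\itemref{packing} from finite to compact $\PntSet$ is routine by approximating $\PntSet$ with a fine net. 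Repeated points in the output sequence only weaken the hypothesis invoked here and are therefore harmless, and deleting them turns the sequence into a genuine permutation when $\PntSet$ is finite.)

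It remains to address the implementation claim. Each iteration makes one call to $\BNN$ (to obtain $\nnp_i$) and otherwise reduces to computing $\query_i$, the point of $\Domain_{i-1}$ maximizing $f(x)=\distSet{x}{\GSet_{i-1}}$, plus $\mathrm{poly}(i)$ bookkeeping to record $\ball_i$ and update $\GSet_i,\Domain_i$. The region $\Domain_{i-1}$ is the unit cube with at most $i$ open balls removed, and $f$ is the lower envelope of at most $i$ Euclidean distance functions, so within each Voronoi cell of $\GSet_{i-1}$ it coincides with the (convex) distance to a single site. Since a convex function restricted to the intersection of $\Domain_{i-1}$ with a Voronoi cell attains its maximum on the boundary, iterating this observation downward in dimension shows that $\query_i$ is realized either at a vertex of the arrangement formed by the $2d$ facets of $\Domain$, the at most $i$ bounding spheres of the removed balls, and the $O(i^2)$ bisector hyperplanes of $\GSet_{i-1}$, or at the unique farthest point of one of those spheres from the site of the cell containing it. In fixed dimension $d$ this arrangement has $\mathrm{poly}(i)$ candidate vertices, all enumerable in polynomial time; evaluating $f$ at each candidate (discarding those outside $\Domain_{i-1}$ or outside the relevant cell) and taking the best one yields $\query_i$. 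Thus $i$ iterations run in $\mathrm{poly}(i)$ time using exactly $i$ calls to $\BNN$ (plus the initial call for $\nnp_0$).

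All of the mathematical content is carried by \lemref{only:one:in:cone}; the step I expect to be the most delicate is the geometric subroutine for the farthest-point computation, together with the accounting that shows the associated arrangement has only polynomially many candidate vertices in constant dimension — fiddly, but not conceptually deep.
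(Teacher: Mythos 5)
Your proposal is correct and follows essentially the same route as the paper: the paper's proof is the one-line ``Using Lemma~\ref{lemma:greedy:clustering}~\itemref{packing} in Lemma~\ref{lemma:only:one:in:cone} implies the result,'' plus the remark that the running time follows by maintaining the arrangement of balls in the domain in polynomial time. You have simply unpacked exactly that plan — boosting $k$ to $k'=ak$ via the packing property so that the $4\ropt^{k'}$ radius from the cone lemma and the triangle inequality drops below $\ropt^{k}$, and spelling out a polynomial-time arrangement-based search for $\query_i$ — so the substance and the decomposition match the paper's own proof.
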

\begin{proof}
    Using \lemref{greedy:clustering} \itemref{packing} in
    \lemref{only:one:in:cone} implies the result. As for the running
    time, naively one needs to maintain the arrangement of balls
    inside the domain, and this can be done in polynomial time in the
    number of balls.
\end{proof}

\begin{observation}
    If $\PntSet$ is finite of size $n$, the above theorem implies that
    after $i \geq cn$ iterations, one can recover the entire point set
    $\PntSet$ (as $\roptX{n}=0$). Therefore $cn$ is an upper bound on
    the number of queries for any problem. Note however that in
    general our goal is to demonstrate when problems can be solved
    using a significantly smaller amount of \NN queries. 
    % In particular, the above algorithm does not
    % require knowing the cardinality of $\PntSet$.
\end{observation}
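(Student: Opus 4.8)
The plan is to instantiate \thmref{g:p:via:nn} at the extreme parameter $k = \cardin{\PntSet} = n$, where the $k$-center clustering radius degenerates to zero. The first step is to check that $\roptX{n} = 0$: since $\PntSet$ consists of exactly $n$ points, covering it by $n$ balls of equal radius, each centered at a data point, is possible with radius $0$ (one ball per point), and no smaller radius is feasible, so the optimum is $0$. The one convention to keep in mind here is that these covering balls are \emph{closed}, so $\ballY{\pnt}{0} = \brc{\pnt}$; this is precisely the convention under which $\roptX{k}$ was defined in \secref{prelims}, and it is also the only place where any care is needed in this argument — there is no real technical obstacle.

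Given this, I would apply the main conclusion of \thmref{g:p:via:nn} with $k = n$, which gives
\[
\PntSet \;\subseteq\; \bigcup_{i=0}^{cn} \ballY{\nnp_i}{\roptX{n}}
\;=\; \bigcup_{i=0}^{cn} \ballY{\nnp_i}{0}
\;=\; \brc{\nnp_0, \ldots, \nnp_{cn}}.
\]
Thus every point of $\PntSet$ occurs among the first $cn+1$ terms of the sequence $\nnp_0, \nnp_1, \ldots$ emitted by \AlgGPermutNN (note that this sequence may a priori repeat points, so ``recovering $\PntSet$'' means only that the underlying \emph{set} $\brc{\nnp_0,\ldots,\nnp_{cn}}$ equals $\PntSet$). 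By the running-time clause of \thmref{g:p:via:nn}, producing these terms costs $cn$ calls to $\BNN$ plus polynomial post-processing, so after $i \ge cn$ iterations the entire point set has been reconstructed explicitly.

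Finally, for the closing sentence I would observe that once $\PntSet$ is known in full, any computational task whose input is $\PntSet$ can be completed with no further oracle access, simply by running the corresponding classical (explicit-input) algorithm on the reconstructed set. Hence $O(n)$ exact \NN queries always suffice, and $cn$ is the claimed universal upper bound on the number of proximity probes needed. The ``hard part,'' such as it is, is nothing more than the bookkeeping in the previous paragraph — the substance of the observation is entirely inherited from \thmref{g:p:via:nn}.
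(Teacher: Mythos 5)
Your proposal is correct and spells out exactly the immediate implication that the paper leaves implicit: set $k=n$ in \thmref{g:p:via:nn}, observe $\roptX{n}=0$ so the covering balls degenerate to singletons, conclude that $\brc{\nnp_0,\ldots,\nnp_{cn}}=\PntSet$, and note that once the set is explicit no further oracle access is needed. This is the same (and only natural) route the paper intends; the observation is stated without a formal proof precisely because nothing beyond this instantiation is required.
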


The above also implies an algorithm for approximating the diameter.
\begin{lemma}
    \lemlab{diam:const:apprx}%
    \RefProofInAppendix{d:c:a} %
    %a
    Consider the setting of \thmref{g:p:via:nn} using an exact
    nearest-neighbor oracle. Suppose that the algorithm is run for $m
    = \cconst_d + 1$ iterations, and let $\nnp_1, \ldots, \nnp_m$ be
    the set of output centers and $r_1, \ldots, r_m$ be the
    corresponding distances.  Then, $\diameterX{\PntSet}/3 \leq \max
    (\diameterX{\nnp_1,\ldots,\nnp_m}, r_m ) \leq 3 \cdot
    \diameterX{\PntSet}$.
\end{lemma}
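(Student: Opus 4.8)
The plan is to prove the two inequalities separately, using two structural properties of \AlgGPermutNN: the outer-approximation property $\PntSet\subseteq\Domain_i$ for all $i$, and the monotonicity $r_1\ge r_2\ge\cdots$ of the radii. Write $D=\diameterX{\PntSet}$ and let $\GSet$ denote the set of output centers $\nnp_0,\nnp_1,\ldots$ (the set the lemma calls $\nnp_1,\ldots,\nnp_m$). Since $\GSet\subseteq\PntSet$ we have $\diameterX{\GSet}\le D$, so the term $\diameterX{\nnp_1,\ldots,\nnp_m}$ is already at most $D$, handling one of the two terms in the upper bound. For the lower bound: by construction $\query_m$ is the point of $\Domain_{m-1}\supseteq\PntSet$ farthest from the centers in hand before it, realizing distance $r_m$; hence every point of $\PntSet$ lies within distance $r_m$ of $\GSet$. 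Picking $\pnt,\pnt'\in\PntSet$ with $\distY{\pnt}{\pnt'}=D$ and centers $g,g'\in\GSet$ within $r_m$ of them, the triangle inequality gives $D\le r_m+\diameterX{\GSet}+r_m\le 3\max(\diameterX{\GSet},r_m)$. So it only remains to show $r_m\le 3D$.

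To bound $r_m$, I would argue by contradiction: assume $r_m>3D$, so that $r_i>3D$ for all $i\le m$ by monotonicity. The first observation is that $\distY{\query_i}{\nnp_0}\in[r_i,\,r_i+D]$ for every iteration $i$: the lower bound holds because $\nnp_0$ is among the centers of $\GSet_{i-1}$ and $r_i$ is the distance from $\query_i$ to the nearest such center, and the upper bound because all centers lie in $\PntSet$ and hence within $D$ of $\nnp_0$. Now fix the covering $\ConeSet_{\nnp_0}$ of the sphere of directions at $\nnp_0$ by $\cconst_d$ cones of angular diameter at most $\pi/12$. Each $\query_i$ is at distance $\ge r_i>0$ from $\nnp_0$, hence lies in one of these cones, and since the $m=\cconst_d+1$ iterations produce $m$ query points, two of them, $\query_i$ and $\query_j$ with $i<j$, lie in a common cone. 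On one hand $\query_j$ survived into $\Domain_{j-1}\subseteq\Domain_i=\Domain_{i-1}\setminus\ball_i$, so it lies outside $\ball_i$ and $\distY{\query_i}{\query_j}\ge r_i$. On the other hand, $\query_i$ and $\query_j$ subtend an angle $\theta\le\pi/12$ at $\nnp_0$ and lie at distances $a\in[r_i,r_i+D]$ and $b\in[r_j,r_j+D]$ from it, so by the law of cosines $\distY{\query_i}{\query_j}^2=a^2+b^2-2ab\cos\theta\le a^2+b^2-2ab\cos(\pi/12)$; since $r_j\le r_i$ and $r_i>3D$, this quantity is strictly smaller than $r_i^2$, a contradiction. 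Hence $r_m\le 3D$, and combining with the first paragraph proves the lemma.

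The one fiddly step is the final estimate: that $a^2+b^2-2ab\cos(\pi/12)<r_i^2$ holds for all $a\in[r_i,r_i+D]$, $b\in[r_j,r_j+D]$ with $r_j\le r_i$ and $r_i>3D$. Since the left side is convex in $(a,b)$, it suffices to check the four corners of this box, which is a short but tedious calculation in exactly the spirit of the one deliberately omitted in the proof of \lemref{only:one:in:cone}. This is also where the three constants in the statement are matched against one another: the iteration count $\cconst_d+1$ (forcing two queries into a common cone), the cone angle $\pi/12$ (narrow enough that two points lying far out along a common cone are necessarily close together), and the factor $3$ (providing enough room to absorb the $\pm D$ slack caused by the diameter of $\PntSet$). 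Finally, the argument is insensitive to whether \AlgGPermutNN carves the ball $\ballY{\query_i}{r_i}$ or the empty ball $\ballY{\query_i}{\distSet{\query_i}{\PntSet}}$ at step $i$: in the latter case the outer-approximation property is a one-line induction and the separation above only weakens to $\distY{\query_i}{\query_j}\ge r_i-D$, which the same estimate still absorbs.
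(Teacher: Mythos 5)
Your proof is correct, and it follows the same two-pronged structure as the paper's: bound $\diameterX{\nnp_1,\ldots,\nnp_m}$ by $D$ trivially, bound $r_m$ by $3D$ via a cone/packing argument, and get the lower bound from the triangle inequality. The one substantive difference is in how the bound $r_m \le 3D$ is obtained. The paper simply invokes \lemref{only:one:in:cone} with $k=1$, combined with the observation that the optimal $1$-center radius lies in $[D/2, D]$; you instead give a self-contained derivation, pigeonholing the $\cconst_d+1$ query points into the $\cconst_d$ cones at $\nnp_0$ and deriving a contradiction from the carved-ball separation versus the law-of-cosines estimate inside a common narrow cone. Your version is essentially an unrolled, specialized ($k=1$, apex at $\nnp_0$ rather than at an optimal center) restatement of the packing argument that underlies \lemref{only:one:in:cone}, so in spirit it is the same argument, but it has the virtue of being self-contained and of making the constant $m=\cconst_d+1$ drop out directly from a one-extra-point pigeonhole rather than from the looser ``$O(1)$ points per cone'' bound in that lemma. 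Your lower-bound derivation (pick a diametral pair, walk through nearby centers) is the contrapositive form of the paper's; both are one-liners. The final ``fiddly step'' you flag but do not carry out is of the same character, and at the same level of omitted detail, as the ``tedious but straightforward calculations'' the paper itself omits in \lemref{only:one:in:cone}, and I verified the inequality $(a-b)^2 + 2ab(1-\cos\tfrac{\pi}{12}) < (r_i - D)^2$ does hold on the relevant box under the constraint $r_j > 3D$, so the claim is sound. Your closing remark about the discrepancy between $\ballY{\query_i}{r_i}$ (as written in the algorithm description) and $\ballY{\query_i}{\distY{\query_i}{\nnp_i}}$ (as used in the proof of \lemref{only:one:in:cone}) correctly identifies a genuine wrinkle in the paper's exposition, and your observation that the argument absorbs the resulting $\pm D$ slack is right.
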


\begin{proof:in:appendix:e}{\lemref{diam:const:apprx}}{d:c:a}
    Since the discrete one-center clustering radius lies in the
    interval $[\diameterX{\PntSet}/2, \diameterX{\PntSet}]$,
    \lemref{only:one:in:cone} implies that $r_m \leq 3 \ropt \leq 3
    \cdot \diameterX{\PntSet}$.  Moreover, each $\nnp_i$ is in
    $\PntSet$, and so $\diameterX{\nnp_1, \ldots, \nnp_m} \leq
    \diameterX{\PntSet}$. Thus the upper bound follows. 

    For the lower bound, observe that if
    \begin{math}
        \diameterX{\nnp_1, \ldots, \nnp_m} < \diameterX{\PntSet}/3,
    \end{math}
    as well as $r_m < \diameterX{\PntSet}/3$, then it must be true
    that
    $\PntSet \subseteq \Domain_{m - 1} \subseteq \bigcup_{j = 1}^l
    \ballY{\nnp_j}{r_m}$
    has diameter less than $\diameterX{\PntSet}$, a contradiction.
\end{proof:in:appendix:e}

\subsection{Using approximate nearest-neighbor search}

If we are using an \ANN black box $\BANN$ to implement the algorithm,
one can no longer scoop away the ball $\ball_i = \ballY{\query_i}{
   \distY{\query_i}{\nnp_i}}$ at the $i$\th iteration, as it might
contain some of the points of $\PntSet$. Instead, one has to be more
conservative, and use the ball
\begin{math}
    \ball_i' = \ballY{\query_i}{ (1-\eps) \distY{\query_i}{\nnp_i}}
\end{math}
Now, we might need to perform several queries till the volume being
scooped away is equivalent to a single exact query.

Specifically, let $\PntSet$ be a finite set, and consider its
associated \emphi{spread}:
\begin{align*}
    \Spread = \frac{\diameterX{\Domain_0}}%
    {\min_{\pnt, \pntA \in \PntSet} \distY{\pnt}{\pntA}}.
\end{align*}
We can no longer claim, as in \lemref{only:one:in:cone}, that each
cone would be visited only one time (or constant number of times).
Instead, it is easy to verify that each query point in the cone,
shrinks the diameter of the domain restricted to the cone by a factor
of roughly $\eps$.  As such, at most
$O\pth{ \log_{1/\eps} \Spread } = O\pth{ \eps^{-1} \log \Spread }$
query points would be associated with each cone.

\begin{corollary}
    Consider the setting of \thmref{g:p:via:nn}, with the modification
    that we use a $(1+\eps)$-\ANN data structure $\BANN$ to access
    $\PntSet$. Then, for any $k$,
    \begin{math}
        \PntSet \subseteq \bigcup_{i=1}^{f(k)} \ballY{
           \nnp_i}{\roptX{k}},
    \end{math}
    where $f(k) = O\pth{ k \eps^{-1} \log \Spread}$.
\end{corollary}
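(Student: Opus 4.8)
The plan is to re-run the proof of \lemref{only:one:in:cone}, tracking the extra slack introduced by approximate queries, and then finish as in the proof of \thmref{g:p:via:nn}. I would first check soundness of the algorithm: since $\nnp_i$ is a $(1+\eps)$-\ANN, $\distY{\query_i}{\nnp_i}\le(1+\eps)\distSet{\query_i}{\PntSet}$, so the conservative ball $\ball_i'=\ballY{\query_i}{(1-\eps)\distY{\query_i}{\nnp_i}}$ has radius at most $(1-\eps^2)\distSet{\query_i}{\PntSet}<\distSet{\query_i}{\PntSet}$ and contains no point of $\PntSet$; hence $\Domain_0\supseteq\Domain_1\supseteq\cdots$ still all contain $\PntSet$, and the radii still satisfy $r_1\ge r_2\ge\cdots$. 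As in \lemref{only:one:in:cone}, fix an optimal center $\ocen$ with optimal cluster $\PntSet_\ocen\subseteq\ballY{\ocen}{\ropt}$ and a cone $\cone$ of angular diameter at most $\pi/12$ and apex $\ocen$, and call $\query_j$ admissible for $\cone$ if $\nnp_j\in\PntSet_\ocen$ and $\query_j\in\cone$; each query is admissible for exactly one of the $k\cconst_d$ cones. The two triangle-inequality facts carry over: every admissible output $\nnp'$ of $\cone$ lies within $\ropt$ of $\ocen$, and for all but the first admissible query of $\cone$ the corresponding radius obeys $r'\le\ell'+2\ropt$, where $\ell'=\distY{\query'}{\nnp'}$; in addition $\ell'\le(1+\eps)r'$ (the \ANN output of $\query'$ is no farther than $1+\eps$ times its distance to any point of $\GSet\subseteq\PntSet$).

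The heart of the matter is to replace the ``$O(1)$ admissible queries per cone'' bound by ``$O(\log_{1/\eps}\Spread)$''. Argue by contradiction, assuming $r_j>C\ropt$ for all $j\le\mu$, for a suitably large constant $C$. Let $\query_1',\query_2',\dots$ be the admissible queries of $\cone$ in algorithm order and $\rho_t=\distY{\query_t'}{\ocen}$; the facts above give $|\ell_t'-\rho_t|\le\ropt$ and, for $t\ge 2$, $\rho_t\ge(C-O(1))\ropt$. The key geometric observation is that, because $\cone$ is narrow, the carved ball $\ballY{\query_t'}{(1-\eps)\ell_t'}$, of radius $\approx(1-\eps)\rho_t$, removes from $\cone\cap\Domain$ the whole portion at distance from $\ocen$ in an interval of the form $[\Theta(\eps)\rho_t+O(\ropt),\,\Theta(1)\rho_t]$: to engulf the entire widening cross-section of the cone, a ball centered at distance $\rho_t$ from the apex reaches back only to distance about $\eps\rho_t$ from the apex, which is exactly where the factor $\eps$ enters (the computation is of the same style as in \lemref{only:one:in:cone}, with ball radius $(1-\eps)\ell_t'$ instead of $\ell_t'$). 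A later admissible query of $\cone$ cannot lie in the surviving ``far'' region (distance $\gtrsim\rho_t$ from $\ocen$): there $\ell'\gtrsim\rho_t$, so $r'\ge\ell'/(1+\eps)>r_t'$ for $C$ large and $\eps$ bounded away from $1$, contradicting the monotonicity of the radii. Hence $\rho_{t+1}\le\Theta(\eps)\rho_t+O(\ropt)$; unrolling this recurrence, whose fixed point is $O(\ropt)$, forces $\rho_t=O(\ropt)$ once $t=\Omega\pth{\log_{1/\eps}\pth{\diameterX{\Domain_0}/\ropt}}=\Omega\pth{\log_{1/\eps}\Spread}$, where I use $\ropt=\roptX{k}\ge\roptX{n-1}\ge(1/2)\min_{\pnt,\pntA\in\PntSet}\distY{\pnt}{\pntA}$, so $\diameterX{\Domain_0}/\ropt=O(\Spread)$. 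But $\rho_t=O(\ropt)$ contradicts $\rho_t\ge(C-O(1))\ropt$ for $t\ge2$; so $\cone$ has $O(\log_{1/\eps}\Spread)$ admissible queries.

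Summing over the $k\cconst_d$ cones, after $\mu=O\pth{k\cconst_d\log_{1/\eps}\Spread}$ iterations the assumption ``$r_j>C\ropt$ for all $j\le\mu$'' fails, so $r_j\le C\ropt$ for some $j\le\mu$; then $\PntSet\subseteq\Domain_{j-1}$ lies within $C\ropt$ of $\GSet_j\subseteq\GSet_\mu$ --- this is the \ANN analogue of \lemref{only:one:in:cone}. I would then conclude exactly as in the proof of \thmref{g:p:via:nn}: apply \lemref{greedy:clustering} \itemref{packing} with $k$ replaced by $k'=O(c^d k)$ for a constant $c$ chosen so that $(C+1)\roptX{k'}\le\roptX{k}$; running the algorithm for $\mu'=O\pth{k'\cconst_d\log_{1/\eps}\Spread}$ iterations then puts every point of $\PntSet$ within $\roptX{k}$ of $\GSet_{\mu'}$, giving the claimed containment with $f(k)=\mu'=O\pth{k\cconst_d\log_{1/\eps}\Spread}=O\pth{k\eps^{-1}\log\Spread}$ in constant dimension (the degenerate case $\roptX{k}=0$ being handled as in the observation after \thmref{g:p:via:nn}). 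I expect the main obstacle to be the cone-geometry estimate --- pinning down that the carved region reaches back to distance $\Theta(\eps)\rho_t$, not merely $\Theta(1)\rho_t$, from the apex, which is what produces the correct $\eps$-dependence, while cleanly absorbing the $O(\ropt)$ additive errors from $\PntSet_\ocen\subseteq\ballY{\ocen}{\ropt}$ and from the $(1\pm\eps)$ factors; the rest is routine bookkeeping mirroring \lemref{only:one:in:cone} and \thmref{g:p:via:nn}.
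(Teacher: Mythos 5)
Your proposal is correct and follows essentially the same approach as the paper's (brief) justification: re-running the cone argument of \lemref{only:one:in:cone}, with the conservative ball of radius $(1-\eps)\distY{\query_i}{\nnp_i}$, and showing that each admissible query shrinks the remaining uncovered distance-from-apex by a factor of roughly $\eps$, yielding $O(\log_{1/\eps}\Spread)$ admissible queries per cone instead of $O(1)$. You merely flesh out the geometric estimate that the paper states in one sentence (``shrinks the diameter of the domain restricted to the cone by a factor of roughly $\eps$''), and your handling of the spread and the final packing step match the paper's reasoning.
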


%----------------------------------
\subsection{Discussion}

\myparagraph{Outer approximation.} %
As implied by the algorithm description, one can think about the
algorithm providing an outer approximation to the set:
$\Domain_1\supseteq \Domain_2 \supseteq \cdots \supseteq \PntSet$. As
demonstrated in \figref{g:p:execution}, the sequence of points
computed by the algorithm seems to be a reasonable greedy permutation
of the underlying set. However, the generated outer approximation
seems to be inferior. If the purpose is to obtain a better outer
approximation, a better strategy may be to pick the $i$\th query point
$\query_i$ as the point inside $\Domain_i$ farthest away from
$\bd\Domain_{i-1} \cup \GSet_{i-1}$

\myparagraph{Implementation details.} %
We have not spent any effort to describe in detail the algorithm of
\thmref{g:p:via:nn}, mainly because an implementation of the exact
version seems quite challenging in practice. A more practical approach
would be to describe the uncovered domain $\Domain_i$ approximately,
by approximating from the inside, every ball $\ball_i$ by an
$O\pth{1/\eps^d}$ grid of cubes, and maintaining these cubes using a
(compressed) quadtree. This provides an explicit representation of
the complement of the union of the approximate balls. Next, one would
need to maintain for every free leaf of this quadtree, a list of points of
$\GSet_i$ that might serve as its nearest neighbors --- in the spirit
of approximate Voronoi diagrams \cite{h-gaa-11}.

% ------------------------------------------------------------------
% ------------------------------------------------------------------

\section{Convex-hull membership queries via %
   proximity queries}
\seclab{ch:memb}

Let $\PntSet$ be a set of $n$ points in $\Re^d$, let $\diam$ denote
$\PntSet$'s diameter, and let $\eps > 0$ be a prespecified parameter.
We assume that the value of $\diam$ is known, although a constant
approximation to this value is sufficient for our purposes. (See
\lemref{diam:const:apprx} on how to compute this under reasonable
assumptions.)

Let $\CH = \CHX{\PntSet}$ denote $\PntSet$'s convex hull. Given a
query point $\query \in \Re^d$, the task at hand is to decide if
$\query$ is in $\CH$.  As before, we assume that our only access to
$\PntSet$ is via an \ANN data structure. There are two possible 
outputs:
\begin{compactenum}[\qquad(A)]
    \item \RetInHull: if $\query \in \CH$, and
    \item \RetOutHull: if $\query$ is at distance greater than $\eps \diam$ from
    $\CH$,
\end{compactenum}
Either answer is acceptable if $\query$ lies within distance 
$\eps \diam$ of $\partial \CH$.

\subsection{Convex hull membership queries using %
   exact extremal queries}
   
   We first solve the problem using exact extremal queries and then 
later show these queries can be answered approximately with \ANN queries.

\subsubsection{The algorithm}
\seclab{e:extremal}

\parpic[r]{%
   \begin{minipage}{0.3\linewidth}%
       \vspace{-1cm}
       \includegraphics{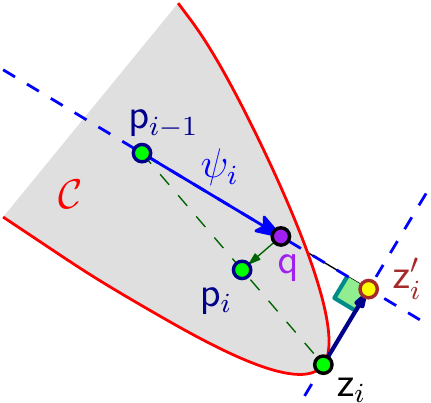}%
   \end{minipage}%
}% \vspace{1cm}

\noindent%
We construct a sequence of points $\pnt_0, \pnt_1, \ldots$ each
guaranteed to be in the convex hull $\CH$ of $\PntSet$ and use them to
determine whether $\query \in \CH$.  The algorithm is as follows.
Let $\pnt_0$ be an arbitrary point of $\PntSet$.  For $i > 0$, in the
$i$\th iteration, the algorithm checks whether
$\distY{\pnt_{i-1}}{\query} \leq \eps \diam$, and if so the algorithm
outputs \RetInHull and stops.

Otherwise, consider the ray $\ray_{i}$ emanating from $\pnt_{i-1}$ in
the direction of $\query$.  The algorithm computes the point 
$\pntB_i \in \PntSet$ that is extremal in the direction of this ray.
If the projection $\pntB_i'$ of $\pntB_i$ on the line supporting
$\ray_i$ is between $\pnt_{i-1}$ and $\query$, then $\query$ is
outside the convex-hull $\CH$, and the algorithm stops and returns
\RetOutHull.  Otherwise, the algorithm sets $\pnt_i$ to be the
projection of $\query$ on the line segment $\pnt_{i-1} \pntB_i$, and
continues to the next iteration. See figure on the right, and
\figref{direct}.

The algorithm performs $O(1/\eps^2)$ iterations, and returns
\RetOutHull if it did not stop earlier.

% The algorithm performs $N = 2\log_{1/(1-\eps^2)}(1/\eps)$
% iterations, and returns \RetOutHull if it did not stop earlier.

\subsubsection{Analysis}

\begin{lemma}
    \lemlab{shrink}%
    If the algorithm runs for more than $i$ iterations, then
    $\DD_i < \pth{1-\frac{\eps^2}{2}} \DD_{i-1}$, where
    $\DD_i = \distY{\query}{\pnt_i}$.
\end{lemma}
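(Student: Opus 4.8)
The plan is to analyze a single iteration geometrically. Set up coordinates so that $\pnt_{i-1}$ is the origin and $\query$ lies along a coordinate axis at distance $\DD_{i-1} = \distY{\query}{\pnt_{i-1}}$. Since the algorithm did not stop at iteration $i$, we know $\DD_{i-1} > \eps\diam$, and also that the extremal point $\pntB_i$ in the direction of the ray $\ray_i$ has its projection $\pntB_i'$ onto the supporting line of $\ray_i$ lying \emph{beyond} $\query$ (otherwise the algorithm would have returned \RetOutHull). The new point $\pnt_i$ is the foot of the perpendicular from $\query$ onto the segment $\pnt_{i-1}\pntB_i$. So $\DD_i = \distY{\query}{\pnt_i}$ is the distance from $\query$ to the line through $\pnt_{i-1}$ and $\pntB_i$.

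First I would bound the length $\distY{\pnt_{i-1}}{\pntB_i}$: since $\pntB_i \in \PntSet$ and $\query \in \PntSet$... actually $\pnt_{i-1} \in \CH$, so $\distY{\pnt_{i-1}}{\pntB_i} \le \diam$ only if both are data points — more carefully, $\pnt_{i-1}$ lies in the convex hull and $\pntB_i$ is a data point, so this distance is at most $\diam$ (the convex hull has the same diameter as $\PntSet$). Next, let $\theta$ be the angle at $\pnt_{i-1}$ in the triangle $\pnt_{i-1}, \query, \pntB_i$, i.e., the angle between the ray $\ray_i$ (toward $\query$) and the segment toward $\pntB_i$. Then $\DD_i = \DD_{i-1}\sin\theta$. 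The key geometric fact is that because $\pntB_i'$ lies beyond $\query$ on the ray, the projection of $\pntB_i$ onto $\ray_i$ has length at least $\DD_{i-1}$, i.e., $\distY{\pnt_{i-1}}{\pntB_i}\cos\theta \ge \DD_{i-1}$. Combining with $\distY{\pnt_{i-1}}{\pntB_i}\le\diam$ gives $\cos\theta \ge \DD_{i-1}/\diam > \eps$ (using $\DD_{i-1} > \eps\diam$). Hence $\sin^2\theta = 1-\cos^2\theta \le 1-\eps^2$, so
\[
\DD_i^2 = \DD_{i-1}^2\sin^2\theta \le \DD_{i-1}^2(1-\eps^2),
\]
and therefore $\DD_i \le \DD_{i-1}\sqrt{1-\eps^2} < \DD_{i-1}(1-\eps^2/2)$, using $\sqrt{1-x}< 1-x/2$ for $x\in(0,1)$. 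This is the claimed inequality.

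I expect the main obstacle to be the careful verification that "$\pntB_i'$ is beyond $\query$" translates exactly into $\distY{\pnt_{i-1}}{\pntB_i}\cos\theta \ge \DD_{i-1}$, and handling the degenerate configurations (e.g., $\theta$ near $\pi/2$, or $\pnt_i$ coinciding with an endpoint of the segment) so that $\pnt_i$ is still well-defined and still lies in $\CH$; these amount to checking that the foot of the perpendicular from $\query$ does land on the segment $\pnt_{i-1}\pntB_i$ rather than outside it, which follows from the same projection inequality together with $\pntB_i'$ being beyond $\query$. The rest is the elementary trigonometric estimate above.
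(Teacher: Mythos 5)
Your proof is correct and follows essentially the same route as the paper's. Both arguments hinge on the same two facts: that $\distY{\pnt_{i-1}}{\pntB_i}\le\diam$ (both endpoints lie in $\CH$, which has the same diameter as $\PntSet$), and that since $\pntB_i'$ lies beyond $\query$ on the ray, $\distY{\pnt_{i-1}}{\pntB_i'}\ge\distY{\pnt_{i-1}}{\query}>\eps\diam$; together these give $\cos\theta>\eps$ for the angle $\theta$ at $\pnt_{i-1}$ (the paper calls it $\beta$), and then $\DD_i=\DD_{i-1}\sin\theta<\sqrt{1-\eps^2}\,\DD_{i-1}<(1-\eps^2/2)\,\DD_{i-1}$. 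The paper invokes similar triangles where you use a coordinate setup and the projection formula, but that is a cosmetic difference, and you correctly flag the same implicit point the paper leaves tacit, namely that the foot of the perpendicular falls on the segment $\pnt_{i-1}\pntB_i$ because $\pntB_i'$ is beyond $\query$.
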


\begin{proof}
    By construction, $\pnt_i$, $\pnt_{i-1}$, and $\query$ form a right
    angle triangle. The proof now follows by a direct trigonometric
    argument.  Consider \figref{direct}.  We have the following
    properties:
    
    \noindent
    \begin{minipage}{0.23\linewidth}
        \hfill%
        {\includegraphics[page=2]{figs/ray}}
        \captionof{figure}{}
        \figlab{direct}
    \end{minipage}%
    \begin{minipage}{0.76\linewidth}
        \smallskip
        \begin{compactenum}[\quad (A)]
            \item The triangles $\triangle \pnt_{i-1} \pntB_i'
            \pntB_i$ and $\triangle \pnt_{i-1} \pnt_i \query$ are
            similar.

            \item Because the algorithm has not terminated in the $i$\th
            iteration, $\distY{\pnt_{i-1}}{\query} > \eps \diam$.
            
            \item The point $\query$ must be between $\pnt_{i-1}$ and
            $\pntB_i'$, as otherwise the algorithm would have
            terminated.
            % As such, $\distY{\pntB_i'}{\pntB_i} \geq
            % \distY{\pnt_i}{\query} > \eps \diam$.
            Thus, $\distY{\pnt_{i-1}}{\pntB_i'} \geq
            \distY{\pnt_{i-1}}{\query} > \eps \diam $.
            
            \item We have $\distY{\pnt_{i-1}}{\pntB_i}\leq \diam$,
            since both points are in $\CH$.
        \end{compactenum}
    \end{minipage}

    % \hrule
    
    \smallskip\noindent%
    We conclude that
    \begin{math}
        \ds%
        \cos \beta%
        = %
        \frac{\distY{\pnt_{i-1}}{\pntB_i'}}{
           \distY{\pnt_{i-1}}{\pntB_i} }%
        > %
        \frac{\eps \diam}{\diam}%
        =%
        \eps.
    \end{math}
    Now, we have
    \begin{align*}
        \distY{\query}{\pnt_i} %
        &= %
        \distY{\query}{\pnt_{i-1}} \sin \beta%
        = %
        \distY{\query}{\pnt_{i-1}} \sqrt{ 1 -\cos^2 \beta}%
        <%
        \sqrt{1-\eps^2} \distY{\query}{\pnt_{i-1}}%
        \\&%
        <%
        \pth{1-\frac{\eps^2}{2}} \distY{\query}{\pnt_{i-1}},
    \end{align*}
    since $(1-\eps^2/2)^2 > 1-\eps^2$.
\end{proof}

\begin{lemma}
    Either the algorithm stops within $O\pth{1/\eps^2}$ iterations
    with a correct answer, or the query point is more than $\eps
    \diam$ far from the convex hull $\CH$; in the latter case, since
    the algorithm says \RetOutHull its output is correct.
\end{lemma}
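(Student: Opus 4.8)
The plan is to check three things: that every point $\pnt_i$ generated by the algorithm lies in $\CH$; that the two early‑exit outcomes --- returning \RetInHull when $\distY{\pnt_{i-1}}{\query}\le\eps\diam$, and returning \RetOutHull when the projection $\pntB_i'$ falls between $\pnt_{i-1}$ and $\query$ --- are correct (or, near $\bd\CH$, acceptable); and that the algorithm cannot exhaust its $\Theta(1/\eps^2)$‑iteration budget except in a situation where the default \RetOutHull is correct. The first point is an easy induction: $\pnt_0\in\PntSet\subseteq\CH$, and $\pnt_i$ lies on the segment $\pnt_{i-1}\pntB_i$ whose endpoints are in $\CH$ ($\pnt_{i-1}$ by induction, $\pntB_i\in\PntSet$), so $\pnt_i\in\CH$ by convexity. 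Two consequences will be used: $\DD_i=\distY{\query}{\pnt_i}\ge\distSet{\query}{\CH}$, and $\distY{\pnt_{i-1}}{\pntB_i}\le\diam$ since both endpoints lie in $\CH$.

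For the \RetInHull exit we have $\pnt_{i-1}\in\CH$ with $\distY{\pnt_{i-1}}{\query}\le\eps\diam$: if $\query\in\CH$ the answer is correct, and otherwise the nearest point of $\CH$ to $\query$ lies on $\bd\CH$ within $\eps\diam$ of $\query$, so the answer is acceptable. For the \RetOutHull exit, since $\pntB_i$ is extremal over $\PntSet$ in the direction of the ray from $\pnt_{i-1}$ toward $\query$, the hyperplane through $\pntB_i'$ orthogonal to that ray supports $\PntSet$, hence supports $\CH$; as $\pntB_i'$ lies between $\pnt_{i-1}$ and $\query$, the point $\query$ lies in the closed halfspace not containing the interior of $\CH$, so $\query\notin\CH$ (or $\query\in\bd\CH$, still acceptable), and \RetOutHull is correct.

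The quantitative heart is showing the budget suffices. First, if $\distY{\pnt_0}{\query}>\diam$, then since $\distY{\pnt_0}{\pntB_1}\le\diam$ the projection $\pntB_1'$ must lie between $\pnt_0$ and $\query$, so the algorithm returns \RetOutHull already in iteration $1$ --- correctly, as $\query$ is strictly beyond a supporting hyperplane of $\CH$. So assume $\DD_0:=\distY{\pnt_0}{\query}\le\diam$. At any iteration $i$ at which the algorithm does not stop, the projection test failed, i.e. $\distY{\pnt_{i-1}}{\pntB_i'}\ge\DD_{i-1}$; reusing the right triangle $\triangle\pnt_i\pnt_{i-1}\query$ and its angle $\beta$ at $\pnt_{i-1}$ from the proof of \lemref{shrink}, and combining with $\distY{\pnt_{i-1}}{\pntB_i}\le\diam$, we get the sharpened estimate $\cos\beta=\distY{\pnt_{i-1}}{\pntB_i'}/\distY{\pnt_{i-1}}{\pntB_i}\ge\DD_{i-1}/\diam$, hence
\[
   \DD_i^2 \;=\; \DD_{i-1}^2\sin^2\beta \;\le\; \DD_{i-1}^2\pth{1-\DD_{i-1}^2/\diam^2}.
\]
Writing $x_i=\DD_i^2/\diam^2\in[0,1]$ this reads $x_i\le x_{i-1}(1-x_{i-1})$, which gives $1/x_i\ge 1/x_{i-1}+1$ and therefore $x_i\le 1/(i+1)$. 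Thus after $\ceil{1/\eps^2}$ iterations without an early exit, $\DD_i^2\le\diam^2/(i+1)\le\eps^2\diam^2$, and the distance check then forces \RetInHull. Consequently the algorithm always halts within $O(1/\eps^2)$ iterations via \RetInHull or the projection test, both correct by the previous paragraph; it cannot fall through to the default \RetOutHull, since that would require $\distY{\pnt_{i-1}}{\query}>\eps\diam$ at every iteration, which contradicts the displayed bound when $\DD_0\le\diam$ and contradicts the first‑iteration exit when $\DD_0>\diam$ --- so the lemma's second alternative never actually arises, and every \RetOutHull the algorithm produces is correct (or acceptable near $\bd\CH$).

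The only real work is the convergence paragraph, and the point needing care is that the right‑angle geometry behind \lemref{shrink} is genuinely available at each step: one must verify that the foot of the perpendicular from $\query$ onto the line $\pnt_{i-1}\pntB_i$ lies inside the segment, so that $\pnt_i$ really is that foot and $\triangle\pnt_i\pnt_{i-1}\query$ is right‑angled at $\pnt_i$. This holds exactly because the algorithm continues only when $\distY{\pnt_{i-1}}{\pntB_i'}\ge\DD_{i-1}$, which forces the foot's parameter along the segment to be at most $\DD_{i-1}/\distY{\pnt_{i-1}}{\pntB_i'}\le 1$. The mild sharpening $\cos\beta\ge\DD_{i-1}/\diam$, in place of the weaker $\cos\beta>\eps$ recorded in \lemref{shrink}, is precisely what converts its geometric decay into the $O(1/\eps^2)$ iteration count asserted here.
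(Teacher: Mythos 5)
Your proof is correct, and it arrives at the $O(1/\eps^2)$ bound by a genuinely different route from the paper's. The shared ingredient is the sharpened version of \lemref{shrink}: because the algorithm only continues when $\distY{\pnt_{i-1}}{\pntB_i'}\ge\DD_{i-1}$ and since $\distY{\pnt_{i-1}}{\pntB_i}\le\diam$, one has $\cos\beta\ge\DD_{i-1}/\diam$ rather than merely $\cos\beta>\eps$. The paper exploits this via an \emph{epoch} decomposition: it partitions the run into $\ceil{\lg(1/\eps)}$ blocks according to the dyadic scale of $\DD_i$, applies the multiplicative shrink $\DD_i<(1-\eps_j^2/2)\DD_{i-1}$ with $\eps_j=2^{-j}$ within block $j$, and sums the resulting geometric series $\sum_j O(1/\eps_j^2)=O(1/\eps^2)$. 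You instead set $x_i=\DD_i^2/\diam^2$, observe the exact recursion $x_i\le x_{i-1}(1-x_{i-1})$, and convert it into the clean potential statement $1/x_i\ge 1/x_{i-1}+1$, giving $x_i\le 1/(i+1)$ outright. Your approach yields a sharper, explicit iteration count (essentially $1/\eps^2+O(1)$ with no hidden constants), avoids the summation over epochs, and as a bonus shows that the default \RetOutHull{} branch is never actually reached when $N$ is chosen appropriately --- the algorithm always terminates via one of the two explicit tests. You also make explicit two things the paper elides as ``easy certificates'': the inductive fact that every $\pnt_i\in\CH$ (so $\distY{\pnt_{i-1}}{\pntB_i}\le\diam$ is legitimate), and the check that the foot of the perpendicular from $\query$ onto the line through $\pnt_{i-1}\pntB_i$ lies in the segment, so that the right-triangle trigonometry is available at every step. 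Both arguments are sound; yours is tighter and, to my eye, cleaner.
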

\begin{proof}
    If the algorithm stops before it completes the maximum number of
    iterations, it can be verified that the output is correct as there
    is an easy certificate for this in each of the possible cases.
    
    Otherwise, suppose that the query point is within $\eps \diam$ of
    $\CH$. We argue that this leads to a contradiction; thus the query
    point must be more than $\eps \diam$ far from $\CH$ and the output
    of the algorithm is correct.  Observe that $\DD_i$ is a monotone
    decreasing quantity that starts at values $\leq \diam$ (i.e,
    $\DD_0 \leq \diam$), since otherwise the algorithm terminates after
    the first iteration, as $\pntB_1'$ would be between $\query$ and
    $\pnt_0$ on $\ray_1$.
    
    Consider the $j$\th \emphi{epoch} to be block of iterations of the
    algorithm, where $2^{-j} \Delta < \DD_i \leq 2^{-j +
       1}\Delta$. Following the proof of \lemref{shrink}, one observes
    that during the $j$\th epoch one can set $\eps_j = 1/2^j$ in place
    of $\eps$, and using the argument it is easy to show that the
    $j$\th epoch lasts $O(1/\eps_j^2)$ iterations. By assumption,
    since the algorithm continued for the maximum number of iterations
    we have $\DD_i > \eps \diam$, and so the maximum number
    of epochs is $\ceil{\lg( 1/\eps)}$.  As such, the total number of
    iterations is $\sum_{j=1}^{\ceil{\lg (1/\eps)}} O(1/\eps_j^2) = O(
    1/\eps^2)$. Since the algorithm did not stop this is a
    contradiction.
\end{proof}

\subsubsection{Approximate extremal queries}

\noindent%
\begin{minipage}{0.65\linewidth}
    For our purposes, approximate extremal queries on $\PntSet$ are
    sufficient.

\begin{defn}
    A data structure provides \emphi{$\eps$-approximate extremal
       queries} for $\PntSet$, if for any query unit vector $\vVec$,
    it returns a point $\pnt$, such that
    \begin{align*}
        \forall \pntA \in \PntSet, \qquad%
        \DotProd{\vVec}{\pntA} \leq \DotProd{\vVec}{\pnt} + \eps \cdot
        \diameterX{\PntSet},
    \end{align*}
    where $\DotProd{\vVec}{\pntA}$ denotes the dot-product of $\vVec$
    with $\pntA$.
\end{defn}
\end{minipage}%
%\begin{minipage}{0.8\linewidth}
%    \parpic[r]{%
\begin{minipage}{0.35\linewidth}%
    \quad\includegraphics{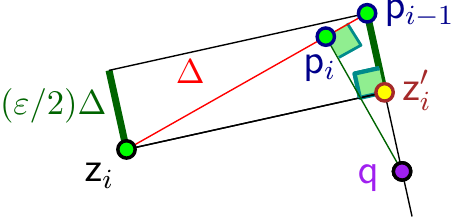}%
    \captionof{figure}{Worse case if extremal queries are
       approximate.}%
    \figlab{extremal:2}%
\end{minipage}%
% }%

\bigskip

One can now modify the algorithm of \secref{e:extremal} to use, say,
$\eps/4$-\si{appro\-\si{ximate}} extremal queries on
$\PntSet$. Indeed, one modifies the algorithm so it stops only if
$\pntB_i$ is on the segment $\pnt_{i-1} \query$, and it is in distance
more than $\eps \diam /4$ away from $\query$. Otherwise the algorithm
continues. It is straightforward but tedious to prove that the same
algorithm performs asymptotically the same number of iterations
(intuitively, all that happens is that the constants get slightly
worse). The worse case as far progress in a single iteration is
depicted in \figref{extremal:2}.

\begin{lemma}
    The algorithm of \secref{e:extremal} can be modified to use
    $\eps/4$-approximate extremal queries and output a correct answer
    after performing $O\pth{1/\eps^2}$ iterations.
\end{lemma}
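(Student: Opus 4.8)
The plan is to re-run the analysis of the exact variant (\lemref{shrink} together with the epoch argument that follows it), tracking how the $\eps/4$ slack of the extremal oracle degrades the constants, and then to dispose of the single new geometric configuration it introduces, namely the one drawn in \figref{extremal:2}. Throughout, the invariant ``every $\pnt_i$ lies in $\CH$'' is preserved, since $\pnt_i$ is either a point of $\PntSet$ or a point on a segment between two points of $\CH$. Correctness of the \RetInHull output is unchanged, as it is produced only when $\distY{\pnt_{i-1}}{\query}\le \eps\diam$ and $\pnt_{i-1}\in\CH$. For \RetOutHull one must check it is never produced when $\query\in\CH$: writing $\vVec$ for the unit vector along $\ray_i$, if $\query\in\CH$ then $\DotProd{\vVec}{\query}\le \max_{\pntA\in\PntSet}\DotProd{\vVec}{\pntA}\le \DotProd{\vVec}{\pntB_i}+(\eps/4)\diameterX{\PntSet}$, so the projection $\pntB_i'$ of $\pntB_i$ onto the line of $\ray_i$ lies behind $\query$ by at most $\eps\diam/4$; the modified rule declares \RetOutHull only when $\pntB_i'$ is behind $\query$ by strictly more than $\eps\diam/4$, so it cannot fire. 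Hence every output produced before the iteration budget is exhausted is correct, and the bound below shows that when the budget is used up and \RetOutHull is returned, $\query$ is indeed more than $\eps\diam$ from $\CH$.

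For the iteration count, consider an iteration $i$ that is actually performed, so $\DD_{i-1}>\eps\diam$ and the \RetOutHull test failed, which means $\distY{\pnt_{i-1}}{\pntB_i'}\ge \DD_{i-1}-\eps\diam/4$. Put $\pnt_{i-1}$ at the origin with $\query$ on the positive $\vVec$-axis, and set $a=\distY{\pnt_{i-1}}{\pntB_i'}$ and $\ell_\perp=\distY{\pntB_i'}{\pntB_i}$; note $a>(3/4)\eps\diam>0$. If the foot of the perpendicular from $\query$ to the line $\pnt_{i-1}\pntB_i$ lies inside the segment $\pnt_{i-1}\pntB_i$, then $\pnt_i,\pnt_{i-1},\query$ form a right triangle exactly as in \lemref{shrink}, with $\cos\beta=a/\distY{\pnt_{i-1}}{\pntB_i}\ge (\DD_{i-1}-\eps\diam/4)/\diam$ (using $\distY{\pnt_{i-1}}{\pntB_i}\le\diam$, as both points lie in $\CH$). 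Now run the epoch argument: in the epoch where $2^{-j}\diam<\DD_{i-1}\le 2^{1-j}\diam$, as long as $j\le \lg(2/\eps)$ we have $\eps\diam/4\le 2^{-j}\diam/2<\DD_{i-1}/2$, hence $\cos\beta>2^{-(j+1)}$, and the trigonometric estimate of \lemref{shrink} yields $\DD_i<\pth{1-2^{-2j-3}}\DD_{i-1}$; therefore this epoch lasts $O(4^{j})$ iterations. Since $\DD_{i-1}>\eps\diam$ forces $j\le \lg(2/\eps)$ for every performed iteration, summing over epochs gives $\sum_{j} O(4^{j})=O(1/\eps^{2})$. If $\query$ is within $\eps\diam$ of $\CH$ this already bounds the run; otherwise $\DD_i$ can never drop below $\eps\diam$, so the same bound shows the algorithm must halt via the \RetOutHull rule within $O(1/\eps^{2})$ steps, with a correct answer.

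The only new case is when that foot falls beyond the endpoint $\pntB_i$, which is the configuration of \figref{extremal:2}; then $\pnt_i=\pntB_i$, and this can only happen when $a<\DD_{i-1}$, so $\DD_{i-1}-\eps\diam/4\le a<\DD_{i-1}$ and $\ell_\perp^{2}<a(\DD_{i-1}-a)<\DD_{i-1}\cdot\eps\diam/4$. Then, using $\eps\diam<\DD_{i-1}$,
\begin{align*}
   \DD_i^{2}
   =
   (\DD_{i-1}-a)^{2}+\ell_\perp^{2}
   <
   \pth{\frac{\eps\diam}{4}}^{2}+\DD_{i-1}\frac{\eps\diam}{4}
   <
   \frac{\DD_{i-1}^{2}}{16}+\frac{\DD_{i-1}^{2}}{4},
\end{align*}
so $\DD_i<(\sqrt{5}/4)\,\DD_{i-1}<0.56\,\DD_{i-1}$. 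Thus every such iteration shrinks $\DD$ by a fixed constant factor, so at most $O(\log(1/\eps))$ of them can occur before $\DD$ drops below $\eps\diam$; added to the previous bound, the total number of iterations is still $O(1/\eps^{2})$.

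I expect the main obstacle to be exactly this last configuration: verifying that the approximate extremal point landing just short of $\query$ but offset sideways still forces a definite decrease of $\DD_i$ rather than a stall, and that such steps are rare enough not to spoil the $O(1/\eps^{2})$ bound. Everything else --- the correctness bookkeeping for the modified halting rule and the epoch analysis --- is a routine constant-factor perturbation of the exact case.
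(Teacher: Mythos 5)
The paper provides no proof of this lemma: it states the modified halting rule, asserts that the analysis is ``straightforward but tedious,'' and points to \figref{extremal:2} as the worst case for a single iteration. Your proposal is a correct, fully worked-out version of exactly that assertion, and it follows the route the paper indicates --- rerun the epoch analysis of \lemref{shrink} with the constants degraded by the $\eps/4$ slack, and separately handle the one new geometric configuration (foot of the perpendicular from $\query$ falling beyond $\pntB_i$) that the approximate extremal query introduces. The correctness bookkeeping for the modified \RetOutHull rule via the dot-product certificate, the perturbed $\cos\beta \geq (\DD_{i-1}-\eps\diam/4)/\diam$ bound, and the epoch sum $\sum_j O(4^j) = O(1/\eps^2)$ are all sound. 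One small simplification you could make in the new-configuration case: from $\ell_\perp^2 < a(\DD_{i-1}-a)$ you get directly $\DD_i^2 = (\DD_{i-1}-a)^2 + \ell_\perp^2 < (\DD_{i-1}-a)\DD_{i-1} \leq (\eps\diam/4)\DD_{i-1} < \DD_{i-1}^2/4$, hence $\DD_i < \DD_{i-1}/2$, which is cleaner than the $\sqrt{5}/4$ constant but leads to the same $O(\log(1/\eps))$ count for such steps.
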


\subsection{Convex-hull membership via \ANN queries}

\subsubsection{Approximate extremal queries via \ANN queries}

The basic idea is to replace the extremal empty half-space query, by
an \ANN query. Specifically, a $(1+\delta)$-\ANN query performed at
$\query$ returns us a point $\pnt$, such that
\begin{align*}
    \forall \pntA \in \PntSet,%
    \qquad%
    \distY{\query}{\pnt}\leq (1+\delta) \distY{\query}{\pntA}.
\end{align*}
Namely, $\ballY{\query}{\frac{\distY{\query}{\pnt}}{1+\delta}}$ does
not contain any points of $\PntSet$. Locally, a ball looks like a
halfspace, and so by taking the query point to be sufficiently far and
the approximation parameter to be sufficiently small, the resulting
empty ball and its associated \ANN can be used as the answer to an
extremal direction query.

\subsubsection{The modified algorithm}
Assume the algorithm is given a data structure $\BANN$ that can answer
$(1+\delta)$-\ANN queries on $\PntSet$. Also assume that it is
provided with an initial point $\pnt_0 \in \PntSet$, and a value
$\aDiam$ that is, say, a 2-approximation to $\diam =
\diameterX{\PntSet}$, that is $\diam \leq \aDiam \leq 2\diam$.

In the $i$\th iteration, the algorithm considers (again) the ray
$\ray_i$ starting from $\pnt_i$, in the direction of $\query$. Let
$\query_i$ be the point within distance, say,
\begin{align}
    \tau = c\aDiam/\eps
    \eqlab{tau:def}%
\end{align}
from $\pnt_{i-1}$ along $\ray_i$, where $c$ is an appropriate constant
to be determined shortly. Next, let $\pntB_i$ be the $(1+\delta)$-\ANN
returned by $\BANN$ for the query point $\query_i$, where the value of
$\delta$ would be specified shortly. The algorithm now continues as
before, by setting $\pnt_i$ to be the nearest point on $\pnt_{i-1}
\pntB_i$ to $\query$. Naturally, if $\distY{\query}{\pnt_i}$ falls
below $\eps \aDiam/2$, the algorithm stops, and returns \RetInHull,
and otherwise the algorithm continues to the next iteration. As
before, if the algorithm reaches the $N$\th iteration, it stops and
returns \RetOutHull, where $N = O(1/\eps^2)$.

\subsubsection{Analysis}

\begin{lemma}%
    \lemlab{a:n:n:extremal}%
    Let $0 < \eps \leq 1$ be a prespecified parameter, and let $\delta
    = \eps^2/(32 -\eps)^2 = O(\eps^2)$. Then, a $(1+\delta)$-\ANN
    query done using $\query_i$ performed by the algorithm, returns a
    point $\pntB_i$ which is a valid $\eps$-approximate extremal query
    on $\PntSet$, in the direction of $\ray_i$.
\end{lemma}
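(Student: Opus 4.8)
The plan is to show that the empty ball obtained from the $(1+\delta)$-\ANN query at $\query_i$ is, locally near $\PntSet$, almost a halfspace whose outward normal points along $\ray_i$; quantifying ``almost'' in terms of $\tau$ and $\delta$ will yield the $\eps$-approximate extremal guarantee. Concretely, let $\pntB_i$ be the point returned, and write $\ell = \distY{\query_i}{\pntB_i}$. The defining property of the \ANN query is that the open ball $\ballY{\query_i}{\ell/(1+\delta)}$ contains no point of $\PntSet$. I want to conclude that for every $\pntA \in \PntSet$, the projection of $\pntA$ onto the direction $\vVec$ of $\ray_i$ exceeds the projection of $\pntB_i$ by at most $\eps\cdot\diameterX{\PntSet}$.

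First I would set up coordinates with $\query_i$ at the origin and $\vVec$ pointing from $\query_i$ back toward $\pnt_{i-1}$ (so that points of $\PntSet$, being roughly at distance $\tau$ from $\query_i$, sit near distance $\tau$ along $\vVec$). The key geometric estimate is the standard ``a ball of radius $R$ looks flat at scale $s$ up to error $\approx s^2/(2R)$'' bound: if a point $\pntA$ lies outside $\ballY{\query_i}{R}$ with $R = \ell/(1+\delta)$, and $\pntA$ is within distance $D \le \aDiam + \tau = O(\aDiam/\eps)$ of the relevant region, then $\DotProd{\vVec}{\pntA} \le \DotProd{\vVec}{\pnt^\star} + O(D^2/R)$, where $\pnt^\star$ is the point of the sphere $\bd\ballY{\query_i}{R}$ extremal in direction $\vVec$. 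Since $\pntB_i$ lies on the sphere of radius $\ell \ge R$ and is (essentially) in direction $\vVec$ from $\query_i$ — the algorithm chose $\query_i$ so that $\ray_i$ points at $\query_i$, and the nearest neighbor $\pntB_i$ must lie in the halfspace on the $\PntSet$-side — I get $\DotProd{\vVec}{\pnt^\star} \le \DotProd{\vVec}{\pntB_i} + (\ell - R) = \DotProd{\vVec}{\pntB_i} + \ell\delta/(1+\delta)$. Combining, $\DotProd{\vVec}{\pntA} \le \DotProd{\vVec}{\pntB_i} + O(D^2/R) + O(\ell\delta)$.

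Next I would bound the two error terms. Because $\tau = c\aDiam/\eps$ and all points of $\PntSet$ lie within $\diam \le \aDiam$ of each other, one has $\ell, R = \Theta(\tau) = \Theta(\aDiam/\eps)$ and $D = O(\aDiam/\eps)$, so the curvature term is $O(D^2/R) = O(\aDiam/\eps) \cdot O(1/c)$; choosing $c$ a large enough constant (and recalling $\aDiam \le 2\diam$) makes this at most $\tfrac{\eps}{2}\diam$. For the second term, $\ell\delta = O(\aDiam/\eps)\cdot\delta$, and with $\delta = \eps^2/(32-\eps)^2 = O(\eps^2)$ this is $O(\eps\,\aDiam) \le \tfrac{\eps}{2}\diam$ after adjusting constants — the specific value $\eps^2/(32-\eps)^2$ is exactly what is needed to balance against $c=32$ (or whatever constant the algorithm fixes) in the worst case. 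Summing gives $\DotProd{\vVec}{\pntA} - \DotProd{\vVec}{\pntB_i} \le \eps\,\diam = \eps\cdot\diameterX{\PntSet}$, which is the claim.

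The main obstacle is making the ``ball looks like a halfspace'' step quantitatively tight rather than merely asymptotic: I must carefully track that $\pntB_i$ itself is close to the true extremal point of the empty ball in direction $\vVec$ (it need not be exactly it, since it is only the \ANN, not the farthest point of $\PntSet$ in that direction), and that the offset $\query_i$ is far enough that the chord-versus-arc discrepancy across the whole diameter-$\diam$ cluster is genuinely $O(\diam^2/\tau)$ and not larger near the edges. This is where the precise constants — the factor $c$ in $\tau$ and the exact formula for $\delta$ — get pinned down, and it is the step most prone to tedious but routine trigonometric bookkeeping; I would organize it as: (i) lower-bound $\DotProd{\vVec}{\pntB_i}$ against the sphere's extreme point, (ii) upper-bound $\DotProd{\vVec}{\pntA}$ for $\pntA$ outside the empty ball via a one-variable calculus estimate on the sphere, (iii) plug in $\tau$ and $\delta$ and collect constants.
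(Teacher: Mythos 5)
Your overall strategy is essentially the same as the paper's: since the points of $\PntSet$ lie at distance roughly $\tau = c\aDiam/\eps$ from $\query_i$, the empty ball around $\query_i$ implied by the \ANN answer is locally nearly a halfspace, and the two error sources --- the \ANN slack $(\ell - \ell/(1+\delta)) \approx \delta\ell \approx \delta\tau$ and the ball-vs-halfspace curvature discrepancy --- are each $O(\eps\diam)$ for the chosen $c$ and $\delta$. The paper's version of this (with $\ell = \distY{\query_i}{\pntC_i}$ the distance to the true extremal point) is to bound $h = (1+\delta)\ell - \sqrt{\ell^2 - \aDiam^2}$, combining the two slacks via the auxiliary assumption $\aDiam \le \sqrt{\delta}\,\ell$; you keep them as two separate summands, which is equally valid.

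There is, however, a concrete error in your curvature estimate. You write the flatness error as $O(D^2/R)$ with $D \le \aDiam + \tau = O(\aDiam/\eps)$. But $D^2/R$ with $D = \Theta(\tau)$ and $R = \Theta(\tau)$ is $\Theta(\tau) = \Theta(\aDiam/\eps)$, not $O(\eps\aDiam)$ --- that is far too large and your subsequent line ``$O(D^2/R) = O(\aDiam/\eps)\cdot O(1/c)$'' does not follow from that substitution, nor is $O(\aDiam/(\eps c))$ small enough for constant $c$. The quantity that must be squared is the \emph{perpendicular} distance from $\pntA$ to the line supporting $\ray_i$, and this is at most $\diam \le \aDiam$: indeed $\pnt_{i-1}$ lies on that line and is in $\CHX{\PntSet}$, so $\distY{\pntA}{\pnt_{i-1}} \le \diam$, and the perpendicular drop is no larger. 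With the correct scale the curvature term is $\aDiam^2/R = O(\aDiam^2/(c\aDiam/\eps)) = O(\eps\aDiam/c) \le \eps\diam/2$ for $c$ a suitable constant. This is exactly what the paper uses, via the bound $\distY{\pntC_i'}{\pntC_i} \le \distY{\pnt_{i-1}}{\pntC_i} \le \aDiam$ before applying Pythagoras. Fixing your $D$ to be this perpendicular distance repairs the argument; the rest of your proposal is sound and in line with the paper's proof.
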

\begin{proof}
    \begin{figure}[t]
        \centerline{\includegraphics{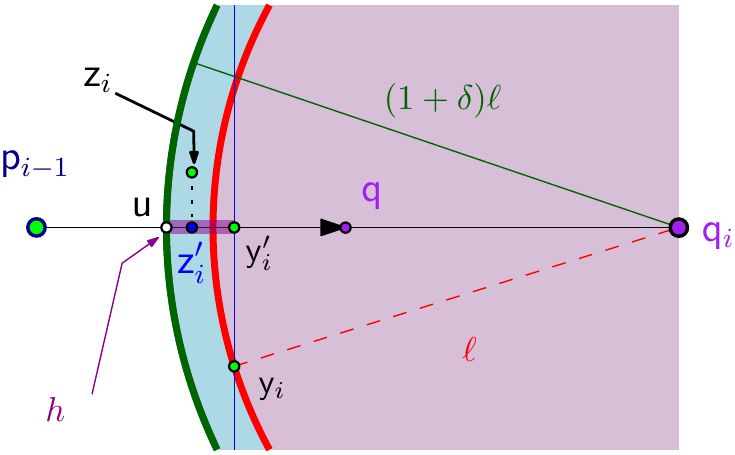}}
        \caption{Illustration of the proof of
           \lemref{a:n:n:extremal}.}%
        \figlab{a:n:n:e}%
    \end{figure}
    Consider the extreme point $\pntC_i \in \PntSet$ in the direction
    of $\ray_i$.  Let $\pntC_i'$ be the projection of $\pntC_i$ to the
    segment $\pnt_{i-1}\query_i$, and let
    $\ell = \distY{\query_i}{\pntC_i}$. See \figref{a:n:n:e}.

    The $(1+\delta)$-\ANN to $\query_i$ (i.e., the point $\pntB_i$),
    must be inside the ball $\ball =
    \ballY{\query_i}{(1+\delta)\ell}$, and let $\pntB_i'$ be its
    projection to the segment $\pnt_{i-1}\query_i$.
    
    Now, if we interpret $\pntB_i$ as the returned answer for
    the approximate extremal query, then the error is the distance
    $\distY{\pntB_i'}{\pntC_i'}$, which is maximized if $\pntB_i'$ is
    as close to $\pnt_{i-1}$ as possible. In particular, let $\pntD$
    be the point in distance $(1+\delta)\ell$ from $\query_i$ along
    the segment $\pnt_{i-1}\query_i$. We then have that
    \begin{math}
        \distY{\pntB_i'}{\pntC_i'}%
        \leq %
        h = \distY{\pntD}{\pntC_i'}.
    \end{math}
    Now, since $\distY{\pntC_i'}{\pntC_i} \leq
    \distY{\pnt_{i-1}}{\pntC_i} \leq \aDiam$, we have
    \begin{align*}
        h%
        & =%
        \distY{\pntD}{\pntC_i'}%
        \leq%
        (1+\delta) \ell - \distY{\pntC_i'}{\query_i}%
        = %
        (1+\delta) \ell - \sqrt{\ell^2 - \distY{\pntC_i'}{\pntC_i}^2
        }%
        \\ &%
        \leq%
        (1+\delta) \ell - \sqrt{\ell^2 - \pth{\aDiam}^2 }%
        = %
        \frac{(1+\delta)^2 \ell^2 - \ell^2 + \pth{\aDiam}^2 }
        {(1+\delta) \ell + \sqrt{\ell^2 - \pth{\aDiam}^2 }}%
        \leq %
        \frac{ (2\delta + \delta^2)\ell^2 +
           \pth{\sqrt{\delta}\ell}^2 } {\ell }%
        \\&%
        \leq%
        \frac{ 4\delta \ell^2 } {\ell }%
        =%
        4\delta \ell,
    \end{align*}
    since $\delta \leq 1$, and assuming that $ \aDiam \leq
    \sqrt{\delta} \ell$. For our purposes, we need that
    % $4\delta \ell \leq \eps \diam /2$,
    $4 \delta \ell \leq \eps \diam$. Both of these constraints
    translate to the inequalities,
    \begin{math}
        \ds%
        \pth{\frac{\aDiam}{\ell}}^2%
        \leq%
        \delta%
        \leq%
        \frac{\eps \diam}{4\ell }.
    \end{math}
    Observe that, by the triangle inequality, it follows that
    \begin{align*}
        \ell = \distY{\query_i}{\pntC_i}%
        \leq%
        \distY{\query_i}{\pnt_{i-1}} + \distY{\pnt_{i-1}}{\pntC_i}
        \leq%
        \tau + \diam.
    \end{align*}
    A similar argument implies that $\ell \geq \tau - \diam$.  In
    particular, it is enough to satisfy the constraint
    \begin{math}
        \pth{\frac{\aDiam}{\tau - \diam}}^2%
        \leq%
        \delta%
        \leq%
        \frac{\eps \diam}{4(\tau + \diam) },
    \end{math}
    which is satisfied if
    \begin{math}
        % \ds%
        % \\
        %
        \pth{\frac{\aDiam}{\tau - \diam'}}^2%
        \leq%
        \delta%
        \leq%
        \frac{\eps \diam'/2}{4(\tau + \diam') },
    \end{math}
    as $\diam \leq \aDiam\leq 2\diam$.  Substituting the value of
    $\tau = c\aDiam/\eps$, see \Eqref{tau:def}, this is equivalent to
    \begin{math}
        \pth{\frac{1}{c/\eps - 1}}^2%
        \leq%
        \delta%
        \leq%
        \frac{\eps /2}{4(c/\eps + 1) },
    \end{math}
    which holds for $c = 32$, as can be easily verified, and setting
    $\delta = \eps^2/(32 -\eps)^2 = O( \eps^2)$.
\end{proof}

\begin{theorem}
    Given a set $\PntSet$ of $n$ points in $\Re^d$, let $\eps \in
    (0,1]$ be a parameter, and let $\aDiam$ be a constant
    approximation to the diameter of $\PntSet$. Assume that you are
    given a data structure that can answer $(1+\delta)$-\ANN queries
    on $\PntSet$, for $\delta = O(\eps^2)$. Then, given a query point
    $\query$, one can decide, by performing $O(1/\eps^2)$
    $(1+\delta)$-\ANN queries whether $\query$ is inside the
    convex-hull $\CH = \CHX{\PntSet}$. Specifically, the algorithm
    returns
    \begin{compactitem}
        \item \RetInHull: if $\query \in \CH$, and
        \item \RetOutHull: if $\query$ is more than $\eps \diam$ away
        from $\CH$, where $\diam = \diameterX{\PntSet}$.
    \end{compactitem}
    The algorithm is allowed to return either answer if $\query \notin
    \CH$, but the distance of $\query$ from $\CH$ is at most $\eps
    \diam$.
\end{theorem}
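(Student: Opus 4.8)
The plan is to assemble the components already developed in this section. I would run the iterative algorithm described in the subsection ``The modified algorithm'', implementing each of its extremal-direction queries by a single $(1+\delta)$-\ANN query at the far point $\query_i$, and then argue that the resulting process is, up to constant factors, an instance of the exact-extremal algorithm of \secref{e:extremal}, whose correctness and $O(1/\eps^2)$ iteration bound are already established. No new algorithmic idea is needed beyond this reduction.

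Concretely, I would first produce the initial point $\pnt_0 \in \PntSet$ that the algorithm needs: one \ANN query at an arbitrary point of the domain returns a point of $\PntSet$, which we take as $\pnt_0$ (the value $\aDiam$ is given to us). In iteration $i$ the algorithm forms the ray $\ray_i$ from $\pnt_{i-1}$ toward $\query$, places $\query_i$ at distance $\tau$ along it as in \Eqref{tau:def}, and performs a $(1+\delta)$-\ANN query at $\query_i$. Invoking \lemref{a:n:n:extremal} with $\eps/4$ in place of $\eps$ (which merely rescales the constant $c$ and leaves $\delta = \eps^2/(128-\eps)^2 = O(\eps^2)$), the returned point $\pntB_i$ is a valid $(\eps/4)$-approximate extremal query on $\PntSet$ in the direction of $\ray_i$. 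Hence our algorithm is exactly the $(\eps/4)$-approximate-extremal variant of the algorithm of \secref{e:extremal}, so by the lemma for that variant it halts within $N = O(1/\eps^2)$ iterations and, whenever it halts before iteration $N$, returns a correct answer; since each iteration uses one \ANN query, the total is $O(1/\eps^2)$ \ANN queries.

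It then remains only to match the halting outcomes to the statement. If the algorithm outputs \RetInHull, it does so because some $\pnt_i$ — which by construction is a convex combination of points of $\PntSet$, hence lies in $\CH = \CHX{\PntSet}$ — satisfies $\distY{\query}{\pnt_i} \le \eps\diam$ (using that $\aDiam$ is a constant-factor estimate of $\diam$ and choosing the stopping threshold accordingly); thus either $\query \in \CH$ or $\query$ is within $\eps\diam$ of $\CH$, so \RetInHull is admissible. If it outputs \RetOutHull, this is either because $\pntB_i$ furnished a slightly slackened separating hyperplane along $\ray_i$, in which case the $\eps\diam/4$ margin built into the stopping rule keeps the certificate of $\query \notin \CH$ valid, or because the algorithm exhausted all $N$ iterations, in which case the epoch (doubling) contradiction argument from the analysis of \secref{e:extremal} applies: had $\query$ been within $\eps\diam$ of $\CH$, the algorithm would have stopped earlier, so $\query$ is more than $\eps\diam$ from $\CH$ and \RetOutHull is correct.

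The only real work is bookkeeping, and that is where I expect the difficulty to lie: one must verify that the composition of the $\aDiam$-versus-$\diam$ slack, the $(\eps/4)$-approximate extremal query, and the $O(\eps\diam)$ stopping threshold never overruns the $\eps\diam$ tolerance promised in the theorem, all while keeping the required \ANN accuracy at $\delta = O(\eps^2)$. This is precisely the content of \lemref{a:n:n:extremal} together with the $O(1/\eps^2)$-iteration lemma once they are invoked with the rescaled parameter $\eps/4$, so no further estimates beyond those citations should be required.
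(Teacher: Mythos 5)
Your proposal is correct and follows essentially the same route as the paper, which states this theorem without further proof as the direct assembly of the three preceding pieces: the $O(1/\eps^2)$ iteration bound for the exact-extremal algorithm, the robustness of that algorithm under $\eps/4$-approximate extremal queries, and \lemref{a:n:n:extremal} translating a $(1+\delta)$-\ANN query at the distant point $\query_i$ into an $\eps$-approximate extremal query with $\delta = O(\eps^2)$. Your bookkeeping---the rescaling to $\eps/4$, the observation that each $\pnt_i$ is a convex combination and hence lies in $\CH$, and the reuse of the epoch argument for the case where the iteration budget is exhausted---is exactly what the paper intends the reader to supply.
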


\section{Density clustering}
\seclab{density:clust}

\subsection{Definition}

Given a set $\PntSet$ of $n$ points in $\Re^d$, and a parameter $k$
with $1 \leq k \leq n$, we are interested in computing a set $\CenSet
\subseteq \PntSet$ of ``centers'', such that each center is assigned
at most $k$ points, and the number of centers is (roughly) $n/k$. In
addition, we require that:
\smallskip%
\begin{compactenum}[\qquad(A)]
    \item A point of $\PntSet$ is assigned to its nearest neighbor in
    $\CenSet$ (i.e., $\CenSet$ induces a \emphi{Voronoi partition} of
    $\PntSet$).
    
    \item The centers come from the original point set.
\end{compactenum}
\smallskip%
Intuitively, this clustering tries to capture the local density --- in
areas where the density is low, the clusters can be quite large (in
the volume they occupy), but in regions with high density the clusters
have to be tight and relatively ``small''.

Formally, given a set of centers $\CenSet$, and a center $\cen \in
\CenSet$, its \emphi{cluster} is
\begin{align*}
    \PntSet_\cen = \Set{ \pnt \in \PntSet}{ \bigl. \distY{\cen}{\pnt}
       < \distSet{\pnt}{\CenSet \setminus \brc{c} \bigr.} \Bigl.  },%
\end{align*}
where $\distSet{\cen}{X} = \min_{\pnt \in X} \distY{\cen}{\pnt}$ (and
assuming for the sake of simplicity of exposition that all distances
are distinct). The resulting \emphi{clustering} is
$\PartitionY{\PntSet}{\CenSet} = \brc{ \PntSet_\cen \sep{\cen \in
      \CenSet}}$.  A set of points $\PntSet$, and a set of centers
$\CenSet \subseteq \PntSet$ is a \emphi{$k$-density clustering} of
$\PntSet$ if for any $\cen \in \CenSet$, we have
$\cardin{\PntSet_\cen} \leq k$. As mentioned, we want to compute a
balanced partitioning, i.e., one where the number of centers is
roughly $n / k$. We show below that this is not always possible in
high enough dimensions.

% The purpose is to compute such a $k$-density clustering such that

\subsubsection{A counterexample in high dimension}

\begin{lemma}
    For any integer $n > 0$, there exists a set $\PntSet$ of $n$
    points in $\Re^n$, such that for any $k < n$, a $k$-density
    clustering of $\PntSet$ must use at least $n-k+1$ centers.
\end{lemma}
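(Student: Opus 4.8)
The plan is to exhibit an explicit point set where the Voronoi partition is forced to be highly unbalanced regardless of how the centers are chosen. The natural candidate is the set of $n$ standard basis vectors $\PntSet = \brc{e_1, \ldots, e_n}$ in $\Re^n$ (or, equivalently up to scaling and translation, the vertices of a regular simplex). The key geometric feature of this configuration is that it is \emph{equidistant}: every pair of distinct points is at the same distance (namely $\sqrt{2}$). I would first record this fact, and then argue that near-ties in distances are what make balancing impossible.

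The main step is the following observation about equidistant sets. Suppose $\CenSet \subseteq \PntSet$ is a set of $m$ centers. Take any point $\pnt \in \PntSet \setminus \CenSet$ (if $\CenSet = \PntSet$ we are already using $n > n-k$ centers and there is nothing to prove when $k \geq 1$, but let me handle the general case). Since all pairwise distances are equal, the point $\pnt$ is at distance exactly $\sqrt{2}$ from \emph{every} center in $\CenSet$ — so every center is simultaneously a nearest neighbor of $\pnt$. With the genericity/tie-breaking convention in the definition of a cluster, $\pnt$ lands in exactly one cluster, but here is the crux: for \emph{any} choice of which center ``wins'' each such $\pnt$, one center must absorb at least $\lceil (n-m)/m \rceil$ of the non-center points, plus itself. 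Hence some cluster has size at least $1 + \lceil (n-m)/m \rceil$. For a $k$-density clustering we need this to be $\leq k$, which forces $1 + (n-m)/m \leq k$, i.e. $n/m \leq k$, i.e. $m \geq n/k$. That only gives $m \geq n/k$, which is weaker than the claimed $n-k+1$. So the equidistant set alone is not enough — I need a configuration that additionally pins down \emph{which} center each non-center point must go to.

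The fix is to perturb the simplex so that, for each non-center point, there is a \emph{unique} strictly-closest point, and that closest point is always the same one. Concretely, take $\PntSet = \brc{ \pnt_1, \ldots, \pnt_n }$ with $\pnt_i = e_i + \gamma\, i \cdot u$ for a tiny $\gamma > 0$ and a suitable direction $u$, chosen so that $\pnt_1$ is (strictly) the nearest neighbor of every other point $\pnt_j$, $j \geq 2$ — i.e. a ``star'' configuration where one point dominates. Then in \emph{any} center set $\CenSet$: if $\pnt_1 \in \CenSet$, then every $\pnt_j \notin \CenSet$ has its unique nearest neighbor $\pnt_1 \in \CenSet$, so the cluster of $\pnt_1$ contains all $n - |\CenSet|$ non-centers plus $\pnt_1$ itself, forcing $n - |\CenSet| + 1 \leq k$, i.e. $|\CenSet| \geq n - k + 1$. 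If $\pnt_1 \notin \CenSet$, one still needs a companion argument; this is the delicate part. The cleanest route is to design the perturbation recursively: order the points so that for every $i$, among $\brc{\pnt_i, \pnt_{i+1}, \ldots, \pnt_n}$ the point $\pnt_i$ is the unique nearest neighbor of all the others in that suffix. Then whichever point of $\PntSet$ has the smallest index and is \emph{not} in $\CenSet$, call it $\pnt_t$, has the property that its nearest neighbor among all of $\PntSet$ — and in particular among $\CenSet$ — is some center, and moreover all non-centers with index $> t$ that are ``attracted'' to small-index points funnel into few clusters. Working this out carefully yields that some cluster contains $\geq n - |\CenSet| + 1$ points, giving the bound.

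I expect the main obstacle to be exactly this last point: designing the perturbed simplex and the index ordering so that the nearest-neighbor structure is a clean ``funnel'' (each point's nearest neighbor is the smallest-index other point), and verifying that this property is robust to deleting an arbitrary subset (the centers) — because a cluster is defined by nearest neighbor \emph{within $\CenSet$}, not within all of $\PntSet$, so I must ensure the funnel survives restriction to any $\CenSet$. A convenient way to guarantee robustness is to make the distances strongly hierarchical: $\distY{\pnt_i}{\pnt_j}$ is an increasing function of $\min(i,j)$, so that for any $\pnt_j$, its nearest neighbor in \emph{any} subset containing a smaller index is the \emph{smallest} such index. With that in hand, if $\CenSet$ omits $\pnt_1$ entirely then $\pnt_1$'s presence is irrelevant, but every non-center $\pnt_j$ with $j$ larger than $\min\Set{i}{\pnt_i \in \CenSet}$ goes to that minimum-index center, again overloading one cluster; and the few non-centers with index below that minimum can be bounded separately. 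Assembling these cases gives the required $n - k + 1$ lower bound on $|\CenSet|$.
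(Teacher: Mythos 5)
Your construction idea is essentially the paper's: a perturbed simplex-like configuration with points along the coordinate axes, distances made hierarchical so that low-index points dominate. The paper takes $\pnt_i$ to have $\sqrt{1 - 2^{-i-1}}$ in the $i$-th coordinate and zero elsewhere, giving $\distY{\pnt_i}{\pnt_j}^2 = 2 - 2^{-i-1} - 2^{-j-1}$, and your high-level argument (lowest-index center absorbs all the non-centers) is also the paper's.

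However, there is a genuine gap at exactly the point you flag as delicate. Your hierarchical property --- ``$\distY{\pnt_i}{\pnt_j}$ is an increasing function of $\min(i,j)$'' --- only pins down the nearest neighbor of $\pnt_j$ within a subset that contains at least one smaller index. If $t = \min\Set{i}{\pnt_i \in \CenSet}$, this handles non-centers $\pnt_j$ with $j > t$, but says nothing about non-centers with $j < t$: for those, every center has $\min(j,\cdot) = j$, and your property gives no comparison. Calling these ``few'' and saying they can be ``bounded separately'' is not right: if $\CenSet = \brc{\pnt_n}$, then $t = n$ and \emph{all} $n-1$ non-centers fall into this unhandled case. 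The fix, which the paper builds in, is to make $d_{i,j}$ (for $i<j$) increasing \emph{lexicographically} in $(i,j)$ --- i.e., also strictly increasing in the larger index. With that, for any $\pnt_j$ and any subset $S \not\ni \pnt_j$, the nearest point of $S$ to $\pnt_j$ is simply the smallest-index point of $S$, with no caveat. Then \emph{every} non-center, whether its index is above or below $t$, has $\pnt_t$ as its nearest center, so $\cardin{\PntSet_{\pnt_t}} = n - \cardin{\CenSet} + 1 \leq k$ forces $\cardin{\CenSet} \geq n-k+1$. No case split is needed.
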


\begin{proof}
    Let $n$ be a parameter. For $i=1, \ldots, n$, let $\ell_i =
    \sqrt{1- 2^{-i-1}}$, and let $\pnt_i$ be a point of $\Re^n$ that
    has zero in all coordinates except he $i$\th one, where its value
    is $\ell_i$.  Let $\PntSet = \brc{ \pnt_1, \ldots, \pnt_n}
    \subseteq \Re^n$.  Now, $d_{i,j} = \distY{\pnt_i}{\pnt_j} = \sqrt{
       \ell_i^2 + \ell_j^2} = \sqrt{ 2 - 2^{-i-1} - 2^{-j-1}}$. For $i
    <j$ and $i' < j'$, we have
    \begin{align*}
        d_{i,j} < d_{i',j'}%
        &\iff%
        2 - 2^{-i-1} - 2^{-j-1} < 2 - 2^{-i'-1} - 2^{-j'-1}%
        \iff%
        2^{-i'} + 2^{-j'} < 2^{-i} + 2^{-j}\\
        &\iff%
        \begin{cases}
            i=i' \text{ and } j < j', or\\
            i<i'.
        \end{cases}
    \end{align*}
    That is, the distance of the $i$\th point to all the following
    points $\SuffixSet_{i+1} = \brc{\pnt_{i+1}, \ldots, \pnt_n}$ is
    smaller than the distance between any pair of points of
    $\SuffixSet_{i+1}$.
    
    Now, consider any set of centers $\CenSet \subseteq \PntSet$, and
    let $\cen = \pnt_i$ be the point with the lowest index that
    belongs to $\CenSet$. Clearly,
    $\PntSet_\cen = \pth{\PntSet \setminus \CenSet} \cup \brc{\cen}$;
    that is, all the non-center points of $\PntSet$, get assigned by
    the clustering to $\cen$, implying what we want to show.
\end{proof}

\subsection{Algorithms}

\subsubsection{Density clustering via nets}

\begin{lemma}
    \lemlab{slices}%
    For any set of $n$ points $\PntSet$ in $\Re^d$, and a parameter $k
    < n$, there exists a $k$-density clustering with $O\pth{
       \frac{n}{k} \log \frac{n}{k}}$ centers (the $O$ notation hides
    constants that depend on $d$).
\end{lemma}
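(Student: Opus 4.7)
The plan is to build the center set via a hierarchy of nets at geometrically decreasing scales. For each point $\pnt \in \PntSet$, let $\rho_k(\pnt)$ denote the distance from $\pnt$ to its $k$-th nearest neighbor in $\PntSet$. I partition $\PntSet$ into layers $L_j = \Set{\pnt \in \PntSet}{\rho_k(\pnt) \in [2^j \rho_0, 2^{j+1}\rho_0)}$ based on density scale, for an appropriate base scale $\rho_0$. By merging layers at the extremes (where either all remaining points fit inside one small region, or a layer already contains at most $k$ points and can stand on its own), one can reduce to $O(\log(n/k))$ nontrivial layers.

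Within each layer $L_j$, I would take a maximal $c \cdot 2^j \rho_0$-separated subset $N_j \subseteq L_j$, for an appropriate constant $c$ depending on $d$, and set $\CenSet = \bigcup_j N_j$. A packing argument bounds $\cardin{N_j} = O(\cardin{L_j}/k)$: since every point of $L_j$ has at least $k$ neighbors within distance $2^{j+1}\rho_0$, disjoint balls of radius $\Theta(2^j \rho_0)$ centered at the net points each absorb $\Omega(k)$ points of $L_j$. Summing over the $O(\log(n/k))$ layers yields $\cardin{\CenSet} = O\pth{(n/k)\log(n/k)}$, matching the claim.

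The main obstacle, and what I expect to be the technical heart of the proof, is verifying the Voronoi property: every cell $\PntSet_\cen$ of $\PartitionY{\PntSet}{\CenSet}$ contains at most $k$ points. The danger is that a coarse-scale center $\cen \in N_j$ sitting in a sparse region could swallow many points from a denser neighboring region across the Voronoi boundary. I would rule this out by showing that any $\pnt$ assigned to $\cen$ satisfies $\distY{\pnt}{\cen} = O(\rho_k(\pnt))$: indeed $\pnt$'s own layer contains a net point within $O(\rho_k(\pnt))$, so by the Voronoi property $\cen$ must be no farther. This constrains each cell to a ball whose radius is tied to the local density of its contents, and a scale-by-scale count then bounds the cell size by $O(k)$; tuning the constant $c$ in the net construction brings the count down to exactly $k$.
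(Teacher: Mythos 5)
There is a genuine gap, in fact two. First, the center-count bound does not follow from your packing argument. The claim $\cardin{N_j} = O\pth{\cardin{L_j}/k}$ is false: the $k$ near neighbors witnessing $\rho_k(\pnt) \leq 2^{j+1}\rho_0$ for $\pnt \in L_j$ need not lie in $L_j$, so your disjoint balls absorb $\Omega(k)$ points of $\PntSet$, not of $L_j$; per layer this only gives $\cardin{N_j} \leq n/k$, and the same points of $\PntSet$ are recounted across layers. You are therefore forced to bound the number of layers by $O(\log (n/k))$, but the number of occupied dyadic scales of $\rho_k$ is governed by the spread, not by $n/k$, and the ``merge the extremes'' step does not repair this. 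Concretely, take a tight clump of $k = n/2$ points together with single points at distances $3, 3^2, 3^3, \ldots$ from it: every non-clump point has $\rho_k$ proportional to its distance to the clump, so there are $\Theta(n)$ nonempty layers, each a singleton falling under your ``at most $k$ points, stands on its own'' clause, and your construction places $\Theta(n)$ centers, whereas the lemma promises $O\pth{(n/k)\log(n/k)} = O(1)$ (and indeed two centers suffice here: one clump point and the nearest outlier). Note also that a small layer cannot simply ``stand on its own'': if you withhold a nearby center, you do not get to choose where its points go --- the nearest-center rule decides.

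Second, the Voronoi-cell bound is asserted rather than proved. Maximality does give $\distSet{\pnt}{\CenSet} = O(\rho_k(\pnt))$ for every $\pnt$, hence the cell of a center $\cen$ contains only points with $\distY{\pnt}{\cen} = O(\rho_k(\pnt))$, and a packing argument caps the contribution of any single layer $L_j$ to that cell at $O(k)$ (cover $\ballY{\cen}{c 2^j \rho_0}$ by $O(c^d)$ balls of radius $2^{j-1}\rho_0$, each lying inside $\ballY{\pnt}{2^j\rho_0}$ for some $\pnt \in L_j$ and hence holding at most $k$ points). But nothing in your sketch limits how many scales feed one cell, so the ``scale-by-scale count'' yields $O(k \cdot \#\text{layers})$, not $O(k)$, let alone the required exact $k$. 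What is missing is a blocking mechanism, and that is exactly what the paper's proof supplies by a different route: it covers directions by a constant number $N = O(d^d)$ of cones of small angular diameter and takes an $\eps$-net, with $\eps = (k/N)/n$, for the bounded-VC family of slices (cone intersected with a ball at its apex); within a narrow cone anchored at a center, the nearest net point in that cone is closer than the center to every farther point of $\PntSet$ in the cone, so each cone contributes at most $\eps n = k/N$ points to the cell, and summing over the $N$ cones gives exactly $k$. If you want to salvage the multiscale approach, you would need an analogous per-cone blocking argument and a way to decouple both counts from the number of scales; as written, both halves of your argument have real gaps.
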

\begin{proof}
    Consider the hypercube $[-1,1]^d$. Cover its outer faces (which
    are $(d-1)$-dimensional hypercubes) by a grid of side length
    $1/3\sqrt{d}$. Consider a cell $C$ in this grid --- it has
    diameter $\leq 1/3$, and it is easy to verify that the cone
    $\coneC = \Set{ t \pnt }{ \pnt \in C, t \geq 0}$ formed by the
    origin and $C$ has angular diameter $< \pi/3$. This results in a
    set $\ConeSet$ of $N = O(d^d)$ cones covering $\Re^d$.
    
    Furthermore, the cone $\coneC$ is formed by the intersection of
    $2(d-1)$ halfspaces. As such, the range space consisting of all 
    translations of $\coneC$ has \VC dimension at most $d' = O(d^2 \log d)$ \cite[Theorem
    5.22]{h-gaa-11}. Let \emphi{slice} be the set that is formed by
    the intersection of such a cone, with a ball centered at its
    apex. The range space of such slices has \VC dimension $d'' = O(
    d+2 + d') = O( d^2 \log d)$, since the \VC dimension of balls in
    $\Re^d$ is $d+2$, and one can combine the range spaces as done
    above, see the book \cite{h-gaa-11} for background on this.
    
    Now, for $\eps = (k/N)/n$, consider an $\eps$-net $\RSample$ of
    the point set $\PntSet$ for slices. The size of such a net is
    $\cardin{\RSample} = O\pth{(d''/\eps) \log \eps^{-1} } = d^{O(d)}
    (n/k) \log (n/k)$.
    % Such a net $\RSample$ is formed by a random
    % sample of size $O\pth{(d''/\eps) \log \eps^{-1} } = d^{O(d)}
    % (n/k)
    % \log (n/k)$ (the probability that this is an $\eps$-net is
    % relatively high, if the sample fails, we can always resample).
    
    Consider a point $\pnt \in \PntSet$ that is in $\RSample$, and
    consider the cones of $\ConeSet$ translated so that their apex is
    $\pnt$. For a cone $\cone \in \ConeSet$, let $\nnp_\cone$ be the
    nearest point to $\pnt$ in the set $\brc{\RSample \setminus
       \brc{\pnt} \Bigl.} \cap \cone$. The key observation is that any
    point in $\PntSet \cap \cone$ that is farther away from $\pnt$
    than $\nnp_\cone$, is closer to $\nnp_\cone$ than to $\pnt$; that
    is, only points closer to $\pnt$ than $\nnp_\cone$ might be
    assigned to $\pnt$ in the Voronoi clustering.  Since
    $\RSample$ is an $\eps$-net for slices, the slice
    $\ballY{\bigl.\pnt}{\distY{\pnt}{\nnp_\cone}} \cap \cone$ contains
    at most $\eps n = k/N$ points of $\PntSet$. It follows that at
    most $k/N$ points of $\PntSet \cap \cone$ are assigned to the
    cluster associated with $\pnt$. By summing over all $N$ cones, 
    at most $(k/N)N = k$ points are assigned to $\pnt$, as desired.
\end{proof}

\subsubsection{The planar case}

\begin{lemma}
    \lemlab{slices:2}%
    \RefProofInAppendix{slices:2} %
    For any set of $n$ points $\PntSet$ in $\Re^2$, and a parameter
    $k$ with $1 \leq k \leq n$, there exists a $k$-density clustering
    with $O\pth{ n/k }$ centers.
\end{lemma}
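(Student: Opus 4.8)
The plan is to mimic the structure of the proof of \lemref{slices}, but to exploit the fact that in the plane one can use a single direction-resolving primitive — an angular partition into a constant number of narrow cones — together with a more economical selection of centers that does not lose the extra logarithmic factor coming from the $\eps$-net bound. First I would fix a set $\ConeSet$ of $N = O(1)$ cones of angular diameter $< \pi/3$ covering $\Re^2$, exactly as before, so that whenever $\pnt$ is a chosen center and $\nnp_\cone$ is the nearest chosen center to $\pnt$ inside the translated cone $\cone$ with apex $\pnt$, every point of $\PntSet \cap \cone$ beyond $\nnp_\cone$ is closer to $\nnp_\cone$ than to $\pnt$. Thus, as in \lemref{slices}, only points of $\PntSet$ lying in the slice $\ballY{\pnt}{\distY{\pnt}{\nnp_\cone}} \cap \cone$ can be assigned to $\pnt$, and it suffices to guarantee that each such slice contains at most $k/N$ points.

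The key improvement is to choose the centers greedily rather than via a net. I would sort $\PntSet$ in some order and process points, maintaining the invariant that the current center set $\CenSet$ is ``$k/N$-separated per cone'': for each already-chosen center $\pnt$ and each cone $\cone \in \ConeSet$, the slice $\ballY{\pnt}{\distY{\pnt}{\nnp_\cone}} \cap \cone$ holds at most $k/N$ points of $\PntSet$. Concretely, I would repeatedly pick an arbitrary not-yet-covered point, add it to $\CenSet$, and argue that adding it ``uses up'' a block of $\Theta(k)$ points of $\PntSet$ that cannot force any further center nearby — so the total number of centers is $O(n/k)$. The cleanest way to run the argument: build the clustering incrementally so that each newly added center is chosen to be the deepest/farthest uncovered point (in the spirit of the greedy permutation of \secref{prelims}), and then charge to each center the $\Omega(k)$ points in its own slices. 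Because each point of $\PntSet$ lies in exactly one cone relative to any fixed center, and the $N$ cones partition space, a double-counting argument over the $N = O(1)$ cones turns a per-cone bound of $k/N$ into a global bound of $k$ on the cluster size, while the per-cone ``charge'' of $k/N$ points summed over cones gives $\Omega(k)$ points consumed per center, hence $|\CenSet| = O(n/k)$.

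The main obstacle I anticipate is making the charging rigorous: unlike the $\eps$-net argument, where the net property is global and automatic, here I must show that the greedy/incremental construction actually terminates with few centers and that the per-cone slice counts never exceed $k/N$ for \emph{all} centers simultaneously — including centers added later, whose slices shrink as more centers appear. The delicate point is that adding a new center can only decrease existing slices (since $\nnp_\cone$ can only get closer), so the invariant is preserved monotonically; the termination/size bound then follows because the point that triggers a new center must have had $> k/N$ uncovered points in one of its cone-slices at the moment of selection, and those points can be charged to that center in a way that no point is charged more than $N = O(1)$ times. Formalizing ``no point is overcharged'' — i.e., bounding how many distinct centers can claim a given point of $\PntSet$ through their slices — is where the planar geometry (the $< \pi/3$ cone angle, which forces the relevant centers to be mutually far apart in a packing sense) does the real work, and is the step I would write out most carefully.
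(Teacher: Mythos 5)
Your plan diverges from the paper's proof in an essential way, and the step you single out as ``the one I would write out most carefully'' is precisely where your argument has a gap that I do not see how to fill as stated. The paper does not abandon the $\eps$-net framework at all; instead it keeps the cone decomposition and the same reduction to ``each slice contains at most $k/N$ points,'' but observes that in the plane the family of slices $S = \brc{ \cone_\pnt \cap \ballY{\pnt}{r} }$ (over translates $\pnt$ and radii $r$) is a family of \emph{pseudo-disks}: the boundaries of any two such regions cross at most twice. By Matou\v{s}ek, Seidel and Welzl \cite{msw-hnlls-90}, pseudo-disk range spaces admit $\eps$-nets of size $O(1/\eps)$, with no logarithmic factor. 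Plugging in $\eps = (k/N)/n$ for each of the $N = O(1)$ cones and taking the union gives $O(n/k)$ centers immediately. That single geometric observation is the whole content of the planar improvement.

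Your greedy/charging replacement has two problems. First, the algorithm is not actually pinned down: ``not-yet-covered'' has no meaning in the density-clustering setting (every point is always assigned to its nearest center), and ``pick the farthest uncovered point'' in the spirit of the greedy permutation does not by itself ensure the per-cone slice invariant, because each newly added center starts with \emph{empty} cone-neighborhoods and hence potentially enormous slices of its own, so the invariant you claim is preserved monotonically is in fact broken by every insertion until further centers are added. Second, and more importantly, the bound ``no point is charged more than $N = O(1)$ times'' is asserted, not proven, and it is false as stated: a point $\pntA$ can lie in the slice for the \emph{same} cone $\cone$ of many distinct centers $\cen_1, \cen_2, \ldots$ simultaneously, since the condition $\pntA \in \cone_{\cen_i}$ and $\distY{\cen_i}{\pntA} \leq \distY{\cen_i}{\nnp_{\cone}^{(i)}}$ can hold for a long chain of nested centers marching toward $\pntA$ (each $\cen_{i+1}$ being the nearest center to $\cen_i$ in that cone). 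Controlling this requires exactly the kind of global combinatorial bound that the pseudo-disk/$\eps$-net machinery delivers for free; the packing argument you gesture at does not obviously give it, because the centers in such a chain are not mutually far apart relative to their distance to $\pntA$. So the argument as written does not close, whereas the paper's pseudo-disk observation does, in one line.
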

\begin{proof:in:appendix:e}{\lemref{slices:2}}{slices:2}
    Consider the plane, and a cone $\cone \in \ConeSet$ as used in the
    proof of \lemref{slices}. For a point $\pnt \in \Re^2$, and a
    radius $r$, let $\cone_\pnt$ denote the translated cone having
    $\pnt$ as an apex, and let $s(\pnt, r) = \cone_\pnt \cap
    \ballY{\pnt}{r}$ be a slice induced by this cone. Consider the set
    of all possible slices in the plane:       
    \begin{align*}
        S = \Set{ s(\pnt,r) }{\pnt \in \Re^2, r \geq 0}.
    \end{align*}

    \vspace{-1.2cm}%
    \parpic[r]{\includegraphics{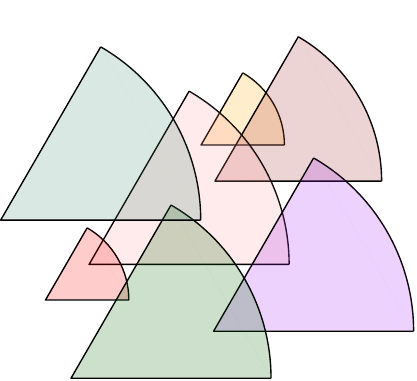}}%
    \vspace{1cm}%

    It is easy to verify that the family $S$ behaves like a system of
    \emph{pseudo-disks}; that is, the boundary of a pair of such
    regions intersects in at most two points, see figure on the
    right. As such, the range space having $\PntSet$ as the ground
    set, and $S$ as the set of possible ranges, has an $\eps$-net of
    size $O(1/\eps)$ \cite{msw-hnlls-90}.  Let $\eps = (k/N)/n$, as in
    the proof of \lemref{slices}, where $N =\cardin{\ConeSet}$.
    Computing such an $\eps$-net, for every cone of $\ConeSet$, and
    taking their union results in the desired set of cluster centers.
\end{proof:in:appendix:e}

% -------------------------------------------------------------------
% -------------------------------------------------------------------
% -------------------------------------------------------------------

\section*{Acknowledgments}
N.K. would like to thank Anil Gannepalli for telling him about Atomic
Force Microscopy.

% -------------------------------------------------------------------------

\InSoCGVer{%
%*flatex input: [./ann_revisit.bbl]
\newcommand{\etalchar}[1]{$^{#1}$}
 \providecommand{\CNFX}[1]{ {\em{\textrm{(#1)}}}}
  \providecommand{\tildegen}{{\protect\raisebox{-0.1cm}{\symbol{'176}\hspace{-0.03cm}}}}
  \providecommand{\SarielWWWPapersAddr}{http://sarielhp.org/p/}
  \providecommand{\SarielWWWPapers}{http://sarielhp.org/p/}
  \providecommand{\urlSarielPaper}[1]{\href{\SarielWWWPapersAddr/#1}{\SarielWWWPapers{}/#1}}
  \providecommand{\Badoiu}{B\u{a}doiu}
  \providecommand{\Barany}{B{\'a}r{\'a}ny}
  \providecommand{\Bronimman}{Br{\"o}nnimann}  \providecommand{\Erdos}{Erd{\H
  o}s}  \providecommand{\Gartner}{G{\"a}rtner}
  \providecommand{\Matousek}{Matou{\v s}ek}
  \providecommand{\Merigot}{M{\'{}e}rigot}
  \providecommand{\CNFSoCG}{\CNFX{SoCG}}
  \providecommand{\CNFCCCG}{\CNFX{CCCG}}
  \providecommand{\CNFFOCS}{\CNFX{FOCS}}
  \providecommand{\CNFSODA}{\CNFX{SODA}}
  \providecommand{\CNFSTOC}{\CNFX{STOC}}
  \providecommand{\CNFBROADNETS}{\CNFX{BROADNETS}}
  \providecommand{\CNFESA}{\CNFX{ESA}}
  \providecommand{\CNFFSTTCS}{\CNFX{FSTTCS}}
  \providecommand{\CNFIJCAI}{\CNFX{IJCAI}}
  \providecommand{\CNFINFOCOM}{\CNFX{INFOCOM}}
  \providecommand{\CNFIPCO}{\CNFX{IPCO}}
  \providecommand{\CNFISAAC}{\CNFX{ISAAC}}
  \providecommand{\CNFLICS}{\CNFX{LICS}}
  \providecommand{\CNFPODS}{\CNFX{PODS}}
  \providecommand{\CNFSWAT}{\CNFX{SWAT}}
  \providecommand{\CNFWADS}{\CNFX{WADS}}

% flatex input end: [./ann_revisit.bbl]
}%
\InNotSoCGVer{%
%*flatex input: [./ann_revisit.bbl]
\newcommand{\etalchar}[1]{$^{#1}$}
 \providecommand{\CNFX}[1]{ {\em{\textrm{(#1)}}}}
  \providecommand{\tildegen}{{\protect\raisebox{-0.1cm}{\symbol{'176}\hspace{-0.03cm}}}}
  \providecommand{\SarielWWWPapersAddr}{http://sarielhp.org/p/}
  \providecommand{\SarielWWWPapers}{http://sarielhp.org/p/}
  \providecommand{\urlSarielPaper}[1]{\href{\SarielWWWPapersAddr/#1}{\SarielWWWPapers{}/#1}}
  \providecommand{\Badoiu}{B\u{a}doiu}
  \providecommand{\Barany}{B{\'a}r{\'a}ny}
  \providecommand{\Bronimman}{Br{\"o}nnimann}  \providecommand{\Erdos}{Erd{\H
  o}s}  \providecommand{\Gartner}{G{\"a}rtner}
  \providecommand{\Matousek}{Matou{\v s}ek}
  \providecommand{\Merigot}{M{\'{}e}rigot}
  \providecommand{\CNFSoCG}{\CNFX{SoCG}}
  \providecommand{\CNFCCCG}{\CNFX{CCCG}}
  \providecommand{\CNFFOCS}{\CNFX{FOCS}}
  \providecommand{\CNFSODA}{\CNFX{SODA}}
  \providecommand{\CNFSTOC}{\CNFX{STOC}}
  \providecommand{\CNFBROADNETS}{\CNFX{BROADNETS}}
  \providecommand{\CNFESA}{\CNFX{ESA}}
  \providecommand{\CNFFSTTCS}{\CNFX{FSTTCS}}
  \providecommand{\CNFIJCAI}{\CNFX{IJCAI}}
  \providecommand{\CNFINFOCOM}{\CNFX{INFOCOM}}
  \providecommand{\CNFIPCO}{\CNFX{IPCO}}
  \providecommand{\CNFISAAC}{\CNFX{ISAAC}}
  \providecommand{\CNFLICS}{\CNFX{LICS}}
  \providecommand{\CNFPODS}{\CNFX{PODS}}
  \providecommand{\CNFSWAT}{\CNFX{SWAT}}
  \providecommand{\CNFWADS}{\CNFX{WADS}}

% flatex input end: [./ann_revisit.bbl]
}%
 
%\bibliographystyle{alpha}%
%FLATEX-REM:\bibliography{ann_revisit}
%\bibliography{shortcuts,geometry}

\InsertAppendixOfProofs

\end{document}